\newcommand{\nc}{\newcommand}
\nc{\rnc}{\renewcommand}
\nc{\nn}{\nonumber}
\nc{\der}{{\partial}}
\rnc{\Im}{{\rm{Im}\,}}
\rnc{\Re}{{\rm{Re}\,}}
\nc{\db}{\displaybreak[0]\\}
\nc{\bra}{\langle}
\nc{\ket}{\rangle}
\nc{\bs}{\boldsymbol}
\DeclareMathOperator{\Tr}{Tr}
\DeclareMathOperator{\End}{End}
\newtheorem{theorem}{Theorem}[section]
\newtheorem{lemma}[theorem]{Lemma}
\newtheorem{proposition}[theorem]{Proposition}
\theoremstyle{definition}
\newtheorem{definition}[theorem]{Definition}
\numberwithin{equation}{section}
\numberwithin{equation}{section}
\begin{document}%
%%%%%%%%%%%%%%%%%%%%%%%%%%%%%%%%%%%%%%%%%%%%%%%%%%%%%%%%%
%TITLE
%%%%%%%%%%%%%%%%%%%%%%%%%%%%%%%%%%%%%%%%%%%%%%%%%%%%%%%%%
%
\title{Partition functions of integrable lattice models and
combinatorics of symmetric polynomials}

\author{
Kohei Motegi$^1$\thanks{E-mail: motegi@gokutan.c.u-tokyo.ac.jp} \,
Kazumitsu Sakai$^2$\thanks{E-mail: sakai@gokutan.c.u-tokyo.ac.jp} \,
and
Satoshi Watanabe$^2$\thanks{E-mail: watanabe@gokutan.c.u-tokyo.ac.jp}
\\\\
$^1${\it Faculty of Marine Technology, Tokyo University of Marine Science and Technology,}\\
 {\it Etchujima 2-1-6, Koto-Ku, Tokyo, 135-8533, Japan} \\
\\
$^2${\it Institute of physics, University of Tokyo,} \\ 
{\it Komaba 3-8-1, Meguro-ku, Tokyo 153-8902, Japan}
\\\\
\\
}

\date{\today}

\maketitle

\begin{abstract}
We review and present new studies
on the relation between the partition functions of integrable lattice models
and symmetric polynomials, and its combinatorial representation theory
based on the correspondence, including our work.
In particular, we examine the correspondence between
the wavefunctions of the XXZ type, Felderhof type and the boson type
integrable models and symmetric polynomials such as the Schur,
Grothendieck and symplectic Schur functions.
We also give a brief report of our work on generalizing the
correspondence between the Felderhof models and factorial Schur and
symplectic Schur functions.
\end{abstract}

\section{Introduction}
Integrable lattice models
\cite{Bethe,FST,Baxter,KBI}
are special classes of statistical mechanics,
which interesting connections with many subjects of mathematics
have been found in the past, and will be found in the future.
In particular, it plays a very important role in representation theory
and combinatorics, and many notions and objects
in modern representation theory have their origin in integrable models,
the quantum group \cite{Dr,J} for example.
Many objects were introduced by investigating the mathematical structures
of microscopic or quasimacroscopic
quantities such as a single $R$-matrices and monodromy matrices.
From the point of view of statistical mechanics,
the most fundamental quantity in statistical mechanics
is the partition function, which is the most macroscopic bulk quantity
constructed from the monodromy matrices
and characterizes the whole system. For non-integrable models,
partition functions are just numbers, which do not usually have
interesting connections with mathematics.
The situation changes for integrable models:
one can introduce the spectral parameter.
There are many advantages of introducing spectral parameter.
One thing from the point of physics
is that if one introduces the spectral parameter,
we can construct a generating function
of conserved quantities for integrable models.
One advantage from the viewpoint of mathematics
is that we find it corresponds to the symmetric variables
of symmetric polynomials when one investigates a class
of partition functions called the wavefunctions.
Another advantage is that it plays a role of the variable
for the generating function of the enumeration of alternating sign matrices
\cite{Bre,Ku,Ku2,Ok}.
In fact, these facts are deeply related with each other.

In this article, we review and present new studies
on the relation between partition functions of integrable lattice models
and combinatorics of symmetric polynomials.
We mainly deal with two integrable vertex models:
(i) the XXZ-type models and (ii) the Felderhof models.
These models are related with the quantum group of Drinfeld-Jimbo type
and the colored representation, respectively.
In the next two sections,
We explain how symmetric polynomials
such as the Schur, Grothendieck polynomials and their generalizations
arise from a particular type of partition functions
called the wavefunctions.
In section 4, we review how Cauchy and dual Cauchy identities
can be derived by dealing with scalar products and
domain wall boundary partition functions by taking
the XXZ-type and the Felderhof models as an example, respectively.
In section 5, we make comments on other six-vertex models
and give a brief report of our study on
investigating symmetric polynomials from the
generalized Felderhof models.
Section 6 is devoted to conclusion.

\section{XXZ-type models}
We first investigate the Yang-Baxter integrability associated with
the $U_q(sl_2)$ quantum group, following the Appendix of \cite{MS}.

The most fundamental object in quantum integrable models
is the $R$-matrix satisfying the Yang-Baxter relation
\begin{align}
R_{ab}(u_1/u_2)R_{aj}(u_1)R_{bj}(u_2)
=R_{bj}(u_2)R_{aj}(u_1)R_{ab}(u_1/u_2), \label{YBE}
\end{align}
holding in $\mathrm{End}(W_a \otimes W_b \otimes V_j)$
for arbitrary $u_1, u_2 \in \mathbb{C}$.
In this section, we take $W$ and $V$ as
a complex two-dimensional vector spaces $W=V=\mathbb{C}^2$
spanned by the 
``empty state"  $|0\ket={1 \choose 0}$ and the 
``particle occupied state" $|1\ket={0 \choose 1}$,
and take the $R$-matrix
acting on the tensor product $W \otimes V$ or $V \otimes V$
as the following one
\begin{eqnarray}
R(u)=\left( 
\begin{array}{cccc}
u-q u^{-1} & 0 & 0 & 0 \\
0 & q(u-u^{-1}) & 1-q & 0 \\
0 & 1-q & u-u^{-1} & 0 \\
0 & 0 & 0 & u-qu^{-1}
\end{array}
\right), \label{XXZRmatrix}
\end{eqnarray}
which is the $R$-matrix associated with the quantum group $U_q(sl_2)$
\cite{Dr,J}.
The quantum integrable model constructed from the $R$-matrix \eqref{XXZRmatrix}
is called the XXZ chain.

In the original Yang-Baxter relation \eqref{YBE},
every $R$-matrix is the same.
However, one can generalize this relation
to the following $RLL$ relation still keeping the integrability
\begin{align}
R_{ab}(u_1/u_2)L_{aj}(u_1)L_{bj}(u_2)
=L_{bj}(u_2)L_{aj}(u_1)R_{ab}(u_1/u_2), \label{RLL}
\end{align}
holding in $\mathrm{End}(W_a \otimes W_b \otimes V_j)$.
The $L$-operator acts on $W \otimes V$,
where $W$ and $V$ are called the auxiliary and quantum spaces, respectively.
The original Yang-Baxter relation \eqref{YBE}
is recovered by taking the $L$-operator to be
the $R$-matrix $L(u)=R(u)$.
However it is interesting to investigate
the $L$-operator in more detail
since we find that the $L$-operator
obtained in this way generalizes the $R$-matrix \eqref{XXZRmatrix},
which means that one can introduce additional parameters
besides the spectral parameter $u$ and the quantum group $q$.
For example, the parameter $\beta$ which appears in the $L$-operator
below has correspondence in the generalized cohomology theory.

Solving the $RLL$ relation \eqref{RLL} with the $R$-matrix
given by \eqref{XXZRmatrix},
one can show that the following $L$-operator solves the
$RLL$ relation
\begin{align}
L(u)
=
\begin{pmatrix}
\alpha_3 u+\alpha_4 u^{-1} & 0 & 0 & 0 \\
0 & \alpha_3 q u+\alpha_4 u^{-1} & (1-q)\alpha_1 & 0 \\
0 & (1-q)\alpha_2 & \alpha_5 u+\alpha_6 u^{-1} & 0 \\
0 & 0 & 0 & \alpha_5 u+\alpha_6 q u^{-1}
\end{pmatrix}, \label{generalizedLoperator}
\end{align}
where the parameters $\alpha_j, \ j=1,\cdots,6$ and $q$
satisfy the relations
\begin{align}
&(1-q)\alpha_1 \alpha_2+\alpha_3 \alpha_6-\alpha_4 \alpha_5=0, \\
&(q^2-q)\alpha_1 \alpha_2+q^2 \alpha_3 \alpha_6-\alpha_4 \alpha_5=0.
\end{align}
Among the above generalized $L$-operator \eqref{generalizedLoperator},
the following one
\begin{eqnarray}
L(u)=\left( 
\begin{array}{cccc}
u+q \beta u^{-1} & 0 & 0 & 0 \\
0 & q(u+\beta u^{-1}) & 1-q & 0 \\
0 & 1-q & -\beta^{-1}u-u^{-1} & 0 \\
0 & 0 & 0 & -\beta^{-1}u-qu^{-1}
\end{array}
\right), \label{qbetaLoperator}
\end{eqnarray}
which is obtained by setting $\alpha_1=\alpha_2=\alpha_3=1, \alpha_4=q \beta,
\alpha_5=-\beta^{-1}, \alpha_6=-1$
can be regarded as a particularly important $L$-operator.
For example, \eqref{qbetaLoperator}
is a $\beta$-deformation of the $U_q(sl_2)$ $R$-matrix \eqref{XXZRmatrix}.
Another observation is that the wavefunction constructed from the $L$-operator
at $q=0$ gives the $\beta$-Grothendieck polynomials of the
Grassmannian variety.

Now we introduce and examine a class of partition function
which is usually called as the wavefunction.

First, let us consider the monodromy matrix:
\begin{align}
T_{a}(u)=\prod_{j=1}^M L_{a j}(u)
&=
\begin{pmatrix}
A(u) & B(u)  \\
C(u) & D(u)
\end{pmatrix}_{a}. 
\label{monodromy}
\end{align}
The four elements of the monodromy matrix $A(u)$, $B(u)$, $C(u)$ and $D(u)$ are
the operators acting on the quantum space $V_1\otimes \dots \otimes V_M$.

The arbitrary $N$-particle state $|\psi(\{u \}_N) \ket$
(resp. its dual $\langle \psi(\{u \}_N)|$) 
(not normalized) with $N$ spectral parameters
$\{ u \}_N=\{ u_1,u_2,\dots,u_N \}$
is constructed by a multiple action
of $B$ (resp. $C$) operator on the vacuum state 
$|\Omega \rangle:=| 0^{M} \rangle:=|0\ket_1\
\otimes \dots \otimes |0\ket_M$
(resp. $\langle \Omega|:=\langle 0^{M}|:=
{}_1\bra 0|\otimes\dots \otimes{}_M\bra 0|$):
\begin{align}
|\psi(\{u \}_N) \rangle=\prod_{j=1}^N B(u_j)| \Omega \rangle,
\quad
\langle \psi(\{u \}_N)|=\langle \Omega| \prod_{j=1}^N
C(u_j).
\label{statevector}
\end{align}

Next, we introduce the wavefunction
$\bra x_1 \cdots x_N | \psi(\{u\}_N) \ket$ and its dual 
$\bra \psi(\{u\}_N)|x_1\dots x_N \ket$
as the overlap between an arbitrary off-shell 
$N$-particle state $|\psi(\{u\}_N)\ket$ and 
the (normalized) state with an arbitrary particle configuration 
$|x_1 \cdots x_N\ket$ $(1 \le x_1<\dots<x_N \le M$), 
where $x_j$ denotes the positions of the particles. 
The particle configurations are explicitly defined as
\begin{align}
   \bra x_1 \cdots x_N|=\bra \Omega|\prod_{j=1}^N \sigma^+_{x_j}, \qquad
|x_1 \cdots x_N\ket=\prod_{j=1}^N \sigma^-_{x_j} |\Omega\ket.
\end{align}
We find the following form on the wavefunction.
\begin{theorem}
The wavefunction $\bra x_1 \cdots x_N | \psi(\{u\}_N) \ket$
constructed from the generalized $L$-operator \eqref{qbetaLoperator}
has the following form
\begin{align}
\langle x_1 \cdots x_N|\psi(\{ u \}_N) \rangle
=&
\prod_{j=1}^N \frac{(1-q)(u_j+q \beta u_j^{-1})^M}{(-\beta^{-1} u_j-u_j^{-1})}
\prod_{1 \le j<k \le N} \frac{q-u_j^{-2} u_k^2}{1-u_j^{-2} u_k^2} \nonumber \\
&\times \sum_{\sigma \in S_N}\prod_{\substack{1 \le j<k \le N \\ \sigma(j)>\sigma(k)}}
\frac{1-q u_{\sigma(j)}^2 u_{\sigma(k)}^{-2}}
{q-u_{\sigma(j)}^2 u_{\sigma(k)}^{-2}}
\prod_{j=1}^N
\Bigg(
\frac{-\beta^{-1}u_{\sigma(j)}-u_{\sigma(j)}^{-1}}{u_{\sigma(j)}
+q \beta u_{\sigma(j)}^{-1}}
\Bigg)^{x_j}. \label{qbetaGrothendieckpolynomials}
\end{align}
\end{theorem}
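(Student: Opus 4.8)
The plan is to prove \eqref{qbetaGrothendieckpolynomials} by induction on the number of sites $M$, keeping the particle number $N$ arbitrary. The engine is the reduction obtained by peeling off the $L$-operator at one boundary site of the monodromy matrix \eqref{monodromy}: reading off the auxiliary-space component that defines the creation operator gives $B^{[M]}(u)=\omega_1(u)\,B^{[M-1]}(u)+\omega_c(u)\,D^{[M-1]}(u)$, where $\omega_1(u)$ and $\omega_c(u)$ are operators on the stripped quantum site built from the entries of \eqref{qbetaLoperator}: acting on the local vacuum, $\omega_1(u)$ multiplies by $u+q\beta u^{-1}$ and $\omega_c(u)$ creates a particle with weight $1-q$, while on the occupied local state $\omega_1(u)$ multiplies by $q(u+\beta u^{-1})$. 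Since the $RLL$ relation \eqref{RLL} with the $R$-matrix \eqref{XXZRmatrix} propagates, by the standard train argument, to the quadratic relations among the entries of \eqref{monodromy}---in particular to the commutativity $[B(u),B(v)]=0$---the off-shell state $|\psi(\{u\}_N)\rangle=B(u_1)\cdots B(u_N)|\Omega\rangle$ may be expanded by substituting this reduction into every factor. Collecting the data at the stripped site, exactly two families of terms survive: the family in which that site is empty, which contributes $\prod_{j=1}^N(u_j+q\beta u_j^{-1})$ times the $(M-1)$-site, $N$-particle state; and the $N$ terms in which that site carries the (necessarily unique) particle, each bringing one factor $1-q$, a monomial in the $u_i+q\beta u_i^{-1}$ and $q(u_i+\beta u_i^{-1})$, and exactly one operator $D^{[M-1]}(u_m)$ inserted among the remaining $B^{[M-1]}$'s.

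To close the recursion one pushes $D^{[M-1]}(u_m)$ out to the boundary through the $B^{[M-1]}$'s by means of the $D$-$B$ exchange relation read off from \eqref{RLL}, and then uses $D^{[M-1]}(u)|\Omega^{[M-1]}\rangle=(-\beta^{-1}u-u^{-1})^{M-1}|\Omega^{[M-1]}\rangle$. The outcome is an identity expressing $\langle x_1\cdots x_N|\psi(\{u\}_N)\rangle$ on the $M$-site chain as an explicit linear combination, with coefficients rational in the $u_j$, of wavefunctions $\langle\,\cdots|\psi(\{v\}_{N-1})\rangle$ on the $(M-1)$-site chain carrying $N-1$ spectral parameters $\{v\}_{N-1}\subset\{u\}_N$, the cases $x_N<M$ and $x_N=M$ corresponding respectively to the first and the second family above. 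It then remains to verify that the right-hand side of \eqref{qbetaGrothendieckpolynomials} satisfies exactly this recursion, for which the convenient base data are $N=0$, where both sides equal $1$, and $N=1$, where a direct one-row evaluation gives $\langle x_1|\psi(\{u\}_1)\rangle=(1-q)(u_1+q\beta u_1^{-1})^{M-x_1}(-\beta^{-1}u_1-u_1^{-1})^{x_1-1}$, in agreement with \eqref{qbetaGrothendieckpolynomials}.

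I expect the verification of this recursion for the right-hand side of \eqref{qbetaGrothendieckpolynomials} to be the main obstacle. The delicate point is the bookkeeping of the two-body factors $\tfrac{1-q\,u_{\sigma(j)}^2u_{\sigma(k)}^{-2}}{q-u_{\sigma(j)}^2u_{\sigma(k)}^{-2}}$ together with the scalar prefactor $\prod_{j<k}\tfrac{q-u_j^{-2}u_k^2}{1-u_j^{-2}u_k^2}$ as $(M,N)\to(M-1,N-1)$: one splits the sum over $S_N$ according to the value of $\sigma^{-1}(N)$---equivalently, according to which spectral parameter $u_w$ is carried to the extreme occupied site---isolates from the prefactor and from the factor $\bigl(\tfrac{-\beta^{-1}u_w-u_w^{-1}}{u_w+q\beta u_w^{-1}}\bigr)^{x_N}$ the combination $(-\beta^{-1}u_w-u_w^{-1})^M$, and recognizes the remaining sum as the $S_{N-1}$-sum of \eqref{qbetaGrothendieckpolynomials} for the parameters $\{u\}_N\setminus\{u_w\}$ on the shorter chain, dressed by the rational factors coming from the separated pairs $(j,N)$. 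Checking that these separated factors reproduce exactly the coefficients generated by the $D$-$B$ exchange in the algebraic computation---and, as a built-in consistency test, that the whole expression is symmetric in $\{u\}_N$, as it must be since the $B$'s commute---is where the real work lies; the remaining steps are routine rational-function algebra.

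An alternative to the explicit recursion is an Izergin--Korepin-type uniqueness argument. One notes that $\langle x_1\cdots x_N|\psi(\{u\}_N)\rangle$, after division by a manifestly symmetric prefactor, is a Laurent polynomial of controlled degree in each $u_j$, is symmetric in $\{u\}_N$, obeys a reduction identity when one spectral parameter is specialized to a distinguished value (or when two of them coincide), and is determined by its value at a single particle configuration; all of this is immediate from the algebraic Bethe ansatz. One then checks that the right-hand side of \eqref{qbetaGrothendieckpolynomials} enjoys the same properties, the reduction identity again being the only step that requires genuine computation.
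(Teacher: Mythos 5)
Your route is viable but genuinely different from the one the paper follows. The paper's strategy (stated after the theorem and carried out in full for the Felderhof analogue, Theorem~\ref{felderhofwavefunction}) is a matrix-product argument: the wavefunction is rewritten as a trace over the auxiliary spaces of a product of column operators as in \eqref{reov}, an exchange algebra of the type in Lemma~\ref{algebra} is established by induction (diagonalizing $\mathcal{D}_n$ and decomposing $\mathcal{C}_n=\sum_j\mathcal{C}_n^{(j)}$), which forces the position dependence into the permutation sum with an overall factor independent of $\{x_j\}$ as in \eqref{wavefunctiontochu}--\eqref{predet}, and that factor is then fixed by evaluating the single configuration $x_j=j$, i.e.\ by a domain wall boundary partition function computed via Izergin--Korepin. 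Your proposal instead peels one lattice site off the monodromy matrix \eqref{monodromy}, uses $[B(u),B(v)]=0$ and the $D$--$B$ exchange relation from \eqref{RLL} with \eqref{XXZRmatrix} to derive a recursion relating the $M$-site wavefunction to $(M-1)$-site wavefunctions with $N$ or $N-1$ particles, and then verifies that the right-hand side of \eqref{qbetaGrothendieckpolynomials} satisfies this recursion; your base data ($N=0,1$) and the structure of the two surviving families at the stripped site are correct, and your second suggestion (an Izergin--Korepin-type characterization of the wavefunction itself) is close in spirit to one ingredient of the paper's method. What the paper's approach buys is a clean separation of the configuration dependence (obtained algebraically, once and for all) from a single normalization constant fixed at one configuration, as in \eqref{stepoverlap}; what your approach buys is that it uses only the standard algebraic Bethe ansatz relations, at the price of heavy rational-function bookkeeping.

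The caveat is that, as written, your argument is a plan rather than a proof: the step you yourself flag as the main obstacle --- pushing $D(u_m)$ through the string of $B$'s in closed form (the standard ``wanted/unwanted term'' expansion of $D(u)\prod_k B(v_k)|\Omega\rangle$), summing over the position of the stripped-site particle, and checking that the resulting coefficients match the split of the $S_N$ sum in \eqref{qbetaGrothendieckpolynomials} according to $\sigma^{-1}(N)$ --- is precisely where the content of the theorem lies, and it is not carried out. Until that identity of rational functions is verified (the symmetry in $\{u\}_N$ you may legitimately import from $[B(u),B(v)]=0$, so that part is not an extra burden), the proposal does not yet establish the formula; but there is no step in it that should fail, and the induction on $M$ is well founded.
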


We can regard this wavefunction
as a $q$-deformed $\beta$-Grothendieck polynomials since
this generalizes the expression for the wavefunction
constructed from the $L$-operator at the point $q=0$
\cite{MS2}, which gives the $\beta$-Grothendieck polynomials.
The sum of the expression
is exactly proved from the method of matrix product representation
\cite{GMmat,KM} and the domain wall boundary partition function \cite{Ko,Iz}.
A similar proof is given for the wavefunction of the
Felderhof model in the next section. Details for the XXZ-type models
will appear elsewhere \cite{Mo}.

We check below that the wavefunction \eqref{qbetaGrothendieckpolynomials}
reduces essentially to the $\beta$-Grothendieck polynomials.
\begin{align}
\langle x_1 \cdots x_N|\psi(\{ u \}_N) \rangle
=&
\prod_{j=1}^N \frac{u_j^M}{-\beta^{-1} u_j-u_j^{-1}}
\prod_{1 \le j<k \le N} \frac{-u_j^{-2} u_k^2}{1-u_j^{-2} u_k^2} \nonumber \\
&\times \sum_{\sigma \in S_N}\prod_{\substack{1 \le j<k \le N \\ \sigma(j)>\sigma(k)}}
\frac{1}{-u_{\sigma(j)}^2 u_{\sigma(k)}^{-2}}
\prod_{j=1}^N
\Bigg(
\frac{-\beta^{-1}u_{\sigma(j)}-u_{\sigma(j)}^{-1}}{u_{\sigma(j)}}
\Bigg)^{x_j} \nonumber \\
=&
\prod_{j=1}^N \frac{u_j^{M+1}}{-\beta^{-1} u_j^2-1}
\prod_{1 \le j<k \le N} \frac{u_k^2}{u_k^2-u_j^2} \nonumber \\
&\times \sum_{\sigma \in S_N}
\mathrm{sgn}(\sigma)
\prod_{j=1}^N
\prod_{\substack{1 \le j<k \le N \\ \sigma(j)>\sigma(k)}}
u_{\sigma(j)}^{-2} u_{\sigma(k)}^{2}
\prod_{j=1}^N
(-\beta^{-1}-u_{\sigma(j)}^{-2})^{x_j}
\nonumber
\end{align}
\begin{align}
=&
\prod_{j=1}^N \frac{u_j^{M-1}}{-\beta^{-1} u_j^2-1}
\prod_{1 \le j<k \le N} \frac{1}{u_k^2-u_j^2} \nonumber \\
&\times \sum_{\sigma \in S_N}
\mathrm{sgn}(\sigma)
\prod_{j=1}^N (u_j^2)^j
\prod_{\substack{1 \le j<k \le N \\ \sigma(j)>\sigma(k)}}
u_{\sigma(j)}^{-2} u_{\sigma(k)}^{2}
\prod_{j=1}^N
(-\beta^{-1}-u_{\sigma(j)}^{-2})^{x_j} \nonumber \\
=&\frac{\prod_{j=1}^N u_j^{M-1} (-\beta^{-1} u_j^2-1)^{-1}}
{\prod_{1 \le j < k \le N}(u_k^2-u_j^2)}
\mathrm{det}_N(u_j^{2k}(-\beta^{-1}-u_j^{-2})^{x_k})
\nonumber \\
=&(-\beta)^{-N(N-1)/2}\prod_{j=1}^N u_j^{M-1} 
G_\lambda(\bs{z};\beta). \label{correspondence}
\end{align}
Here $G_{\lambda}(\bs{z};\beta)$ is the
Grothendieck polynomials of the Grassmannian variety
\cite{LS,FK,Buch,IN,IS,Mc}
\begin{align}
G_\lambda(\bs{z};\beta)=
   \frac{\mathrm{det}_N(z_j^{\lambda_k+N-k}(1+\beta z_j)^{k-1})}
        {\prod_{1 \le j < k \le N}(z_j-z_k)},
 \label{GR}
\end{align}
where $\bs{z}=\{z_1,\dots,z_N\}$ is a set of variables and
 $\lambda$ denotes a Young diagram
$\lambda=(\lambda_1,\lambda_2,\dots,\lambda_N)$ with weakly decreasing
nonnegative integers $\lambda_1 \ge \lambda_2 \ge \dots \ge \lambda_N \ge 0$.
The correspondence between the Young diagram and the configuration of particles
is given by $\lambda_j=x_{N-j+1}-N+j-1$.
We also relate the symmetric variables $\bs{z}$ and the
spectral parameters by $z_j=-\beta^{-1}-u_j^{-2}$.

We remark that the overall factor of the right hand side
of the correspondence between the wavefunction and the
$\beta$-Grothendieck polynomials \eqref{correspondence}
can easily corrected to be one by a simple gauge transformation
of the $L$-operator, which has combinatorial description
in terms of pipedream \cite{FK},
excited Young diagrams \cite{IN},
set-valued tableaux \cite{Mc} and so on
in the world of Schubert calculus.

\section{Felderhof models}
We review the studies on the Felderhof model
\cite{Mu,DA,FCWZ,BBF,BMN}
.
We start from the following $L$-operator
\begin{eqnarray}
L(u,p,q)=\left( 
\begin{array}{cccc}
1-pqu & 0 & 0 & 0 \\
0 & -p^2(1-p^{-1}qu) & 1-q^2 & 0 \\
0 & (1-p^2)u & u-p^{-1}q & 0 \\
0 & 0 & 0 & u-pq
\end{array}
\right), \label{generalizedrmatrix}
\end{eqnarray}
The generalized $R$-matrix \eqref{generalizedrmatrix}
can be shown to satisfy the Yang-Baxter relation
\begin{align}
R_{ab}(z_1/z_2,p_1,p_1)R_{aj}(z_1,p_1,p_2)R_{bj}(z_2,p_1,p_2)
=R_{bj}(z_2,p_1,p_2)R_{aj}(z_1,p_1,p_2)R_{ab}(z_1/z_2,p_1,p_1),
\label{generalizedyangbaxter}
\end{align}
holding in $\mathrm{End}(W_a \otimes W_b \otimes V_j)$.

The generalized $R$-matrix
\eqref{generalizedrmatrix}
can be constructed from
a class of an exotic quantum group called the
colored representation or the nilpotent representation \cite{Mu,DA}.
The colored representation becomes a finite-dimensional
highest weight representation when the
parameter of the quantum group is fixed at roots of unity.
Each colored representation space is allowed to have
a free parameter, and since the $R$-matrix is understood as
an intertwiner acting on the tensor product of two representation spaces,
one can include two free parameters $p$ and $q$, which can be regarded
to be associated with the auxiliary and quantum spaces respectively.

Let us state the explicit form of the wavefunction
constructed from the generalized $R$-matrix \eqref{generalizedrmatrix}.
\begin{theorem} \label{felderhofwavefunction}
The wavefunction of the generalized $R$-matrix \eqref{generalizedrmatrix}
has the following form
\begin{align}
&\langle x_1 \cdots x_N|\psi(\{ u \}_N) \rangle \nonumber \\
&=
\prod_{j=1}^N (1-q^2)(1-pqu_j)^{M-1} \prod_{1 \le j < k \le N}
\frac{u_k-p^2 u_j}{u_k-u_j} \mathrm{det}_N
\Bigg( \Bigg( \frac{u_j-p^{-1}q}{1-pqu_j} \Bigg)^{x_k-1} \Bigg),
\end{align}
or, in terms of Young diagrams
\begin{align}
&\langle x_1 \cdots x_N|\psi(\{ u \}_N) \rangle \nonumber \\
&=
\prod_{j=1}^N (1-q^2)(1-pqu_j)^{M-1} \prod_{1 \le j < k \le N}
\frac{u_k-p^2 u_j}{u_j-u_k} \mathrm{det}_N
\Bigg( \Bigg( \frac{u_j-p^{-1}q}{1-pqu_j} \Bigg)^{\lambda_k+N-k} \Bigg),
\end{align}
where $\lambda_j=x_{N-j+1}-N+j-1$,
which reduces to the ordinary Schur polynomials
for the case $q=0$ \cite{BBF},
and is also a special case of the factorial Schur functions \cite{BMN}
by an appropriate transformation of variables.
\end{theorem}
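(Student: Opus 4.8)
\emph{Proof strategy.} The plan is to follow the route sketched above for the XXZ-type models. First derive, from the Yang--Baxter relation \eqref{generalizedyangbaxter}, the commutation relations among the entries $A(u),B(u),C(u),D(u)$ of the monodromy matrix \eqref{monodromy} built from the $L$-operator \eqref{generalizedrmatrix}; then evaluate the wavefunction $\langle x_1\cdots x_N|\psi(\{u\}_N)\rangle=\langle x_1\cdots x_N|\prod_{j=1}^N B(u_j)|\Omega\rangle$ by the Izergin--Korepin technique \cite{Ko,Iz} familiar from domain wall boundary partition functions, a matrix product representation of the $B$-operators \cite{GMmat,KM} providing an equivalent, more computational route.

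The first step extracts from \eqref{generalizedyangbaxter} the quadratic relation obeyed by $B(u_1)B(u_2)$. Because the weights in \eqref{generalizedrmatrix} satisfy the free-fermion condition $a_1a_2+b_1b_2=c_1c_2$, this exchange relation is especially simple, and on symmetrizing over the order of the $B(u_j)$'s it produces exactly the factor $\prod_{1\le j<k\le N}\frac{u_k-p^2u_j}{u_k-u_j}$. The relations that carry $A(u)$ and $D(u)$ through a string of $B$'s, together with $A(u)|\Omega\rangle$ and $D(u)|\Omega\rangle$ being proportional to $|\Omega\rangle$ with the frozen weights $1-pqu$ and $u-pq$, are then responsible for the scalar prefactor $\prod_{j=1}^N(1-q^2)(1-pqu_j)^{M-1}$.

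The second step pins down the remaining symmetric ``core''. Realizing $\langle x_1\cdots x_N|\psi(\{u\}_N)\rangle$ as a partition function on an $N\times M$ lattice whose horizontal lines carry $u_1,\dots,u_N$, with the columns $x_1<\cdots<x_N$ occupied and the remaining columns frozen, one uses \eqref{generalizedyangbaxter} to determine how it transforms under transposition of two neighbouring spectral parameters; this exchange property already isolates $\prod_{1\le j<k\le N}\frac{u_k-p^2u_j}{u_k-u_j}$ in front of a fully symmetric remainder. Bounding the degree of that remainder in each $u_j$ and computing the recursion obtained on specializing one parameter then determines it uniquely, and one checks that $\mathrm{det}_N\big((\frac{u_j-p^{-1}q}{1-pqu_j})^{x_k-1}\big)\big/\prod_{1\le j<k\le N}(u_k-u_j)$ satisfies the same constraints. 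Equivalently, one may induct on $N$: strip off $B(u_N)$ using the commutation relations of Step~1, sum over the site at which the last particle is deposited, and recognize the cofactor expansion of the determinant along its last row. The reductions claimed in the theorem follow from the substitution $t_j=(u_j-p^{-1}q)/(1-pqu_j)$: at $q=0$ one recovers the determinant formula for the Schur polynomial \cite{BBF}, while for general $q$ the entries $t_j^{x_k-1}$ match a factorial Schur function after the change of variables of \cite{BMN}.

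The main obstacle is the bookkeeping forced by the colored (nilpotent) structure of the model: the weights in \eqref{generalizedrmatrix} depend separately on the auxiliary parameter $p$ and the quantum parameter $q$, and in \eqref{generalizedyangbaxter} the intertwiner $R_{ab}(z_1/z_2,p_1,p_1)$ carries $p_1$ on both auxiliary legs while the two $L$-operators carry the pair $(p_1,p_2)$, so the exchange relation for the $B$'s is not the naive $sl_2$ one and must be derived with care. A secondary, purely computational point is to verify that the ``unwanted'' terms generated by the commutation relations for $q\neq0$ cancel among themselves, so that --- just as for the free-fermionic six-vertex model of \cite{BBF} --- a single $N\times N$ determinant survives rather than a sum over more general combinatorial objects; once that cancellation is in hand, the rest is the routine verification outlined above.
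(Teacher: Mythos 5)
Your strategy is viable, but it is genuinely different from the proof given in the paper. The paper does not run the Izergin--Korepin characterization on the wavefunction itself: it first rewrites $\bra x_1\cdots x_N|\psi(\{u\}_N)\ket$ as a matrix product over the auxiliary spaces, trading the row monodromy matrices for column operators $\mathcal{C}_N,\mathcal{D}_N$ as in \eqref{reov}, and then proves the key Lemma~\ref{algebra} (a decomposition $\mathcal{C}_N=\sum_j\mathcal{C}_N^{(j)}$ with nilpotency, quasi-commutation with $\mathcal{D}_N$, and exchange relations, established by an inductive explicit diagonalization). This algebra makes the determinant structure and the configuration-independence of the prefactor manifest, as in \eqref{predet}, so that all remaining difficulty is concentrated in one constant $K$, fixed by the single frozen configuration $x_j=j$, i.e.\ by the domain wall boundary partition function; Izergin--Korepin enters only there, via Lemma~\ref{inhomogeneousdomainwall}. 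Your route instead stays within the standard algebraic-Bethe-ansatz toolkit: quasi-commutation of the $B$-operators from \eqref{generalizedyangbaxter} (your observation that the weights of \eqref{generalizedrmatrix} are free-fermionic is correct), then either a Korepin-type uniqueness argument (quasi-symmetry, degree bounds, recursion under specialization, initial condition) applied directly to the position-dependent wavefunction, or an induction on $N$ recognizing the cofactor expansion of the determinant. What the paper's method buys is that the determinant and the position-independence of the normalization come out structurally, with the price of proving the operator-algebra lemma; what your method buys is avoiding the auxiliary-space algebra altogether, with the price of the heavier bookkeeping you yourself flag (verifying enough specialization points against the degree bound, and the cancellation of unwanted terms), which is exactly the part your sketch leaves unverified. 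Two small corrections if you carry it out: the vacuum eigenvalue of $D(u)$ is $(u-p^{-1}q)^M$, not a power of $u-pq$ (the latter is the weight on occupied sites); and the $B$-exchange relation by itself fixes the factor $\prod_{j<k}\frac{u_k-p^2u_j}{u_k-u_j}$ only up to multiplication by a symmetric function, so the prefactor is really pinned down only after your degree/recursion step, not already in Step~1.
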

To explain how to derive a dual Cauchy identity
from the domain wall boundary partition function in the next section,
we also state the following theorem
on the overlap between the wavefunction
$\langle 1 \cdots M|B(u_1) \cdots B(u_N)$
constructed by acting $B$-operators on the state
$\langle 1 \cdots M|:=\langle 1^{M}|:=
{}_1\bra 1|\otimes\dots \otimes{}_M\bra 1|$,
and the the (normalized) state with an arbitrary hole configuration 
$|\overline{x_1} \cdots \overline{x_N} \ket$
$(1 \le \overline{x_1}<\dots< \overline{x_N} \le M$), 
where $\overline{x_j}$ denotes the positions of the holes. 
Explicitly,
\begin{align}
|\overline{x_1} \cdots \overline{x_N} \ket
=\prod_{j=1}^N \sigma^+_{x_j}
(|1 \rangle_1 \otimes \cdots \otimes |1 \rangle_M),
\end{align}
\begin{theorem}
The wavefunction of the generalized $R$-matrix \eqref{generalizedrmatrix}
has the following form
\begin{align}
&\langle 1 \cdots M|B(u_1) \cdots B(u_N)
| \overline{x_1} \cdots \overline{x_N} \rangle \nonumber \\
&=
\prod_{j=1}^N (1-q^2) \{ -p^2(1-p^{-1}qu_j) \}^{M-1} \prod_{1 \le j < k \le N}
\frac{p^2 u_k-u_j}{p^2(u_k-u_j)} \mathrm{det}_N
\Bigg( \Bigg( \frac{u_j-pq}{-p^2(1-p^{-1}qu_j)} \Bigg)^{\overline{x_k}-1} \Bigg),
\end{align}
or, in terms of Young diagrams
\begin{align}
&\langle 1 \cdots M|B(u_1) \cdots B(u_N)
| \overline{x_1} \cdots \overline{x_N} \rangle \nonumber \\
&=
\prod_{j=1}^N (1-q^2) \{ -p^2(1-p^{-1}qu_j) \}^{M-1} \prod_{1 \le j < k \le N}
\frac{p^2 u_k-u_j}{p^2(u_j-u_k)} \mathrm{det}_N
\Bigg( \Bigg( \frac{u_j-pq}{-p^2(1-p^{-1}qu_j)} \Bigg)^{\overline{\lambda_k}+N-k} \Bigg),
\end{align}
where $\overline{\lambda_j}=\overline{x_{N-j+1}}-N+j-1$.
\end{theorem}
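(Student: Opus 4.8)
The plan is to reduce the statement to Theorem~\ref{felderhofwavefunction} by a particle-hole transformation on the quantum space. Introduce $\Theta = \sigma^x_1 \otimes \cdots \otimes \sigma^x_M$ acting on $V_1 \otimes \cdots \otimes V_M$. Then $\langle 1 \cdots M|\,\Theta = \langle \Omega|$, the operator $\Theta$ sends the hole configuration $|\overline{x_1}\cdots\overline{x_N}\rangle$ to the particle configuration $\prod_{j=1}^{N}\sigma^-_{\overline{x_j}}|\Omega\rangle$ at the same sites, and conjugation of the monodromy matrix gives $\Theta\, T_a(u)\,\Theta = \widetilde T_a(u) = \prod_{j=1}^{M} \widetilde L_{aj}(u)$ with $\widetilde L_{aj}(u) = (1_a \otimes \sigma^x_j)\,L_{aj}(u,p,q)\,(1_a \otimes \sigma^x_j)$; in particular $B(u)\mapsto\widetilde B(u)$, the off-diagonal auxiliary entry of $\widetilde T_a(u)$. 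Consequently
\[
\langle 1 \cdots M|B(u_1)\cdots B(u_N)|\overline{x_1}\cdots\overline{x_N}\rangle
= \langle \Omega|\,\widetilde B(u_1)\cdots\widetilde B(u_N)\,\Bigl(\prod_{j=1}^{N}\sigma^-_{\overline{x_j}}\Bigr)|\Omega\rangle ,
\]
so that the problem becomes that of evaluating a (dual) wavefunction for the companion $L$-operator $\widetilde L$.

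The key point is that $\widetilde L$ is itself a Felderhof $L$-operator of the form \eqref{generalizedrmatrix}: it inherits an $RLL$ relation from \eqref{generalizedyangbaxter}, since $\Theta$ does not touch the auxiliary spaces and the same $R$-matrix works, and the multiset of its nonzero Boltzmann weights is exactly that of $L(u,p,q)$, only the assignment of weights to the six admissible vertices being permuted by the flip $|0\rangle\leftrightarrow|1\rangle$ on the quantum line. Concretely this permutation exchanges $1-pqu\leftrightarrow -p^2(1-p^{-1}qu)$ and $u-p^{-1}q\leftrightarrow u-pq$ and rescales the two off-diagonal weights by $p^2$. Running the argument used to prove Theorem~\ref{felderhofwavefunction} --- the matrix-product representation together with the Felderhof domain-wall boundary partition function --- for $\widetilde L$, and translating back through $\Theta$, produces the prefactor $\prod_{j=1}^{N}(1-q^2)\{-p^2(1-p^{-1}qu_j)\}^{M-1}$, the determinant entries $\bigl(\tfrac{u_j-pq}{-p^2(1-p^{-1}qu_j)}\bigr)^{\overline{x_k}-1}$, and the pairwise product $\prod_{j<k}\tfrac{p^2 u_k - u_j}{p^2(u_k-u_j)}$, which is the first displayed identity; the Young-diagram form follows from it by $\overline{\lambda_j} = \overline{x_{N-j+1}} - N + j - 1$ and $\prod_{j<k}(u_k-u_j) = (-1)^{N(N-1)/2}\prod_{j<k}(u_j-u_k)$.

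The step I expect to be the main obstacle is the bookkeeping in the previous paragraph: the conjugation by $\Theta$, the relabelling of which weight plays which role, and the order-reversal implicit in reading off the determinant each contribute overall scalars, signs, and a reindexing of both the spectral parameters and the particle positions, and these must be tracked so that in particular the isolated factor $p^2$ lands only in the denominator of the pairwise product and the exponents $M-1$ and $\overline{x_k}-1$ come out precisely as stated. This is mechanical rather than conceptual.

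As an independent and self-contained route one may characterise both sides directly, in the spirit of Izergin and Korepin: realising the left-hand side as a partition function on an $N\times M$ lattice built from $\widetilde L$, one checks that after removal of the prefactor it is a polynomial of the prescribed degree in each $\tfrac{u_j-pq}{-p^2(1-p^{-1}qu_j)}$, that it obeys the recursion in $N$ obtained by driving one hole to a boundary of the lattice (which freezes a row and reduces $N$ to $N-1$), and that on the triangular configuration $\overline{x}=(M-N+1,\dots,M)$ it degenerates to the Felderhof domain-wall boundary partition function, evaluated by the Izergin determinant. A family of functions with these properties is unique, and the right-hand side manifestly shares them, so the two agree.
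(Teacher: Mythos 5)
Your reduction hinges on the claim that $\widetilde L=(1\otimes\sigma^x_j)\,L(u,p,q)\,(1\otimes\sigma^x_j)$ ``is itself a Felderhof $L$-operator of the form \eqref{generalizedrmatrix}, only with the six weights permuted,'' and this is where the argument breaks. Conjugating only the quantum space does swap the diagonal weights ($1-pqu\leftrightarrow -p^2(1-p^{-1}qu)$ and $u-p^{-1}q\leftrightarrow u-pq$), but it moves the two $c$-type weights $1-q^2$ and $(1-p^2)u$ to the matrix entries $\bigl(|00\rangle,|11\rangle\bigr)$ and $\bigl(|11\rangle,|00\rangle\bigr)$: two of the six admissible vertex configurations are not preserved by flipping the quantum line alone, so $\widetilde L$ conserves (auxiliary particles) $-$ (quantum particles) rather than their sum and is \emph{not} of the form \eqref{generalizedrmatrix} for any choice of parameters. (Also, conjugation by a permutation cannot ``rescale the two off-diagonal weights by $p^2$''; the $p^2$'s in the target formula come from the different weights entering the analogous computation, not from the flip.) Consequently you cannot import Theorem \ref{felderhofwavefunction} or its proof for $\widetilde L$ by a mere relabelling of weights. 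If you instead flip both spaces to restore the six-vertex form, then $B(u)$ becomes a $C$-type operator of the flipped model and the object to evaluate is $\langle\Omega|C'(u_1)\cdots C'(u_N)|x\rangle$, a \emph{dual} wavefunction, which is again not the object of Theorem \ref{felderhofwavefunction}; an extra transposition/symmetry argument would still be needed. Either way, the honest completion amounts to redoing the matrix-product analysis in the hole sector (the recursion now involves the $\mathcal{A}$- and $\mathcal{B}$-type blocks in place of $\mathcal{D}_n,\mathcal{C}_n$, with an analogue of Lemma \ref{algebra}) and fixing the configuration-independent prefactor by freezing $\overline{x_j}=j$, which reduces to the domain wall boundary partition function of Lemma \ref{inhomogeneousdomainwall} --- i.e.\ precisely the direct strategy the paper indicates for this theorem; the shortcut you propose does not exist as stated.

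Your fallback Izergin--Korepin-style characterisation is a legitimate alternative in principle, but as written it is only a sketch: you would need to state and verify the precise degree bounds in each $u_j$ after stripping the prefactor, the recursion obtained at the special values of the spectral parameters (not just ``driving a hole to the boundary''), and the uniqueness statement that these data determine the function, before checking them on the right-hand side. As it stands, neither route closes the argument.
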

The above theorems can be proved by
combining the matrix product method
and the domain wall boundary partition function, as in \cite{MS2}.
Let us prove Theorem \ref{felderhofwavefunction}
The strategy is as follows.
We first rewrite the wavefunctions
into a matrix product form, following \cite{GMmat}.
The matrix product form can be expressed as a determinant with some overall
factor which remains to be calculated. 
The information of the particle configuration 
$\{x_1,x_2,\dots,x_N \}$ is encoded in the determinant.
On the other hand, the overall factor is independent of the
particle positions, and therefore we can determine this factor by
considering the specific configuration: we 
explicitly evaluate the overlap of the consecutive configuration (i.e. $x_j=j$)
which is essentially the same with the domain wall boundary partition function.

Let us begin to compute the wavefunction
$\bra x_1 \cdots x_N | \psi(\{u\}_N) \ket$.
We first rewrite it into the matrix product representation.
With the help of graphical description,
one finds that the wavefunction can be written as
\begin{align}
\bra x_1 \cdots x_N | \prod_{j=1}^N B(u_j)| \Omega \ket
=\Tr_{W^{\otimes N}}
\left[ Q
\bra x_1 \cdots x_N | \prod_{a=1}^N T_{a}(u_a) |
\Omega \ket
\right],
\label{overlap}
\end{align}
where $Q=| 1^N \rangle \langle 0^N |$
is an operator acting on the tensor product of auxiliary spaces
$W_1\otimes  \dots \otimes W_N$.
The trace here is also over the auxiliary spaces.

Changing the viewpoint of the products of the monodromy matrices, we have
\begin{align}
\prod_{a=1}^N T_{a}(u_a)
=\prod_{j=1}^M \mathcal{T}_j(\{u\}_N),
\end{align}
where $\mathcal{T}_j(\{u\}_N):= 
\prod_{a=1}^N L_{a j}(u_{a}) \in \End( W^{\otimes N} \otimes V_j)$
can be regarded as a monodromy matrix consisting of
$L$-operators acting on the same quantum space $V_j$
(but acting on different auxiliary spaces).  The monodromy matrix
is decomposed as
\begin{align}
\mathcal{T}_j(\{u\}_N)
&:=\begin{pmatrix}
\mathcal{A}_N (\{u\}_N) & \mathcal{B}_N(\{u\}_N) \\
\mathcal{C}_N(\{u\}_N) &  \mathcal{D}_N(\{u\}_N)
\end{pmatrix}_j ,
\label{decomp}
\end{align}
where the elements
($\mathcal{A}_N$, etc.) act on 
$W_1\otimes \dots \otimes W_N$.
The wavefunction \eqref{overlap} can then be rewritten by 
$\mathcal{T}_j(\{u\}_N)$ as
\begin{align}
\bra x_1 \cdots x_N | \psi(\{u\}_N) \ket
&=\Tr_{W^{\otimes N}}
\left[ Q
\bra x_1 \cdots x_N |
\prod_{j=1}^M \mathcal{T}_j(\{u\}_N)
|
\Omega \ket
\right] \nn \\
&=\Tr_{W^{\otimes N}}\left[ Q
\mathcal{D}_N^{M-x_N}
\mathcal{C}_N
\mathcal{D}_N^{x_N-x_{N-1}-1}
\dots\mathcal{C}_N\mathcal{D}_N^{x_2-x_1-1}\mathcal{C}_N\mathcal{D}_N^{x_1-1}
\right].
\label{reov}
\end{align}

For these operators, one finds the following recursive 
relations: 
\begin{align}
&\mathcal{D}_{n+1}(\{u\}_{n+1})
=
\begin{pmatrix}
1-pqu_{n+1} & 0  \\
0 & u_{n+1}-p^{-1}q
\end{pmatrix} 
\otimes
\mathcal{D}_n(\{u\}_n)
+
\begin{pmatrix} 
0 & 0  \\
(1-p^2)u_{n+1} & 0
\end{pmatrix}
\otimes
\mathcal{C}_n(\{u\}_n),
\label{reop1} \\
&\mathcal{C}_{n+1}(\{u\}_{n+1})
=
\begin{pmatrix}
0 & 1-q^2  \\
0 & 0
\end{pmatrix}
\otimes
\mathcal{D}_n(\{u\}_n) 
+
\begin{pmatrix}
-p^2(1-p^{-1}qu_{n+1}) & 0  \\
0 & u_{n+1}-pq
\end{pmatrix}
\otimes
\mathcal{C}_n(\{u\}_n),
\label{reop2}  
%\\
%%
%&P_{n+1}(\{u\}_{n+1})
%=P_n(\{u\}_n) \otimes
%\begin{pmatrix}
%0 & 1  \\
%0 & 0
%\end{pmatrix}.
\end{align}
with the initial condition
\begin{align}
\mathcal{D}_1=
\begin{pmatrix}
1-pqu_1 & 0 \\
0    &  u_1-p^{-1}q
\end{pmatrix}, \quad
\mathcal{C}_1=
\begin{pmatrix}
0 & 1-q^2 \\
0 & 0 
\end{pmatrix}.
\label{initialCD}
\end{align}

By using the recursive relations \eqref{reop1} and \eqref{reop2},
one sees that these operators satisfy the following simple algebra.
\begin{lemma}\label{algebra}
There exists a decomposition of $\mathcal{C}_n$ :
$\mathcal{C}_n=\sum_{j=1}^n \mathcal{C}_n^{(j)}$ such that
the following algebraic relations hold for $\mathcal{D}_n$ and $\mathcal{C}_n^{(j)}$:
\begin{align}
&\mathcal{C}_n^{(j)}\mathcal{D}_n
=\frac{u_j-p^{-1}q}{1-pq u_j}\mathcal{D}_n \mathcal{C}_n^{(j)}, \label{rel2} \\
&(\mathcal{C}_n^{(j)})^2=0, \label{rel3} \\
&\mathcal{C}_n^{(j)}\mathcal{C}_n^{(k)}
=-
\frac{(pu_j-q)(1-pqu_k)}{(pu_k-q)(1-pqu_j)}
\mathcal{C}_n^{(k)}\mathcal{C}_n^{(j)}, \ \ \ (j \neq k)
\label{rel4}.
\end{align}
\end{lemma}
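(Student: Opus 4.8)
The plan is to prove the lemma by induction on $n$, propagating both the decomposition $\mathcal{C}_n=\sum_{j=1}^n\mathcal{C}_n^{(j)}$ and the relations \eqref{rel2}--\eqref{rel4} from level $n$ to level $n+1$ by means of the recursions \eqref{reop1}--\eqref{reop2}. For $n=1$ there is nothing to do: taking $\mathcal{C}_1^{(1)}:=\mathcal{C}_1$, the explicit matrices in \eqref{initialCD} give $(\mathcal{C}_1^{(1)})^2=0$ at once, \eqref{rel2} is a one-line check with the two $2\times2$ matrices, and \eqref{rel4} is vacuous. So the whole content is in the inductive step.

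For the step I would read \eqref{reop1}--\eqref{reop2} as presenting $\mathcal{D}_{n+1}$ and $\mathcal{C}_{n+1}$ as $2\times2$ block matrices on the new auxiliary factor $W_{n+1}$, with blocks built from $\mathcal{D}_n$ and $\mathcal{C}_n$; in this form $\mathcal{D}_{n+1}$ is lower block triangular and $\mathcal{C}_{n+1}$ is upper block triangular. The first task is to pin down the level-$(n+1)$ decomposition. I expect the naive ``diagonal'' ansatz---dressing $\mathcal{C}_n^{(j)}$ ($j\le n$) by the scalar $W_{n+1}$-block $\mathrm{diag}(-p^2(1-p^{-1}qu_{n+1}),\,u_{n+1}-pq)$ and setting $\mathcal{C}_{n+1}^{(n+1)}=\bigl(\begin{smallmatrix}0&1-q^2\\0&0\end{smallmatrix}\bigr)\otimes\mathcal{D}_n$---to fail, precisely because the recursion for $\mathcal{D}_{n+1}$ reintroduces $\mathcal{C}_n$ through its $(2,1)$-block, so that the resulting $\mathcal{C}_n^{(j)}\mathcal{C}_n$ and $\mathcal{C}_n\mathcal{C}_n^{(j)}$ terms cannot be matched by a single scalar. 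The correct $\mathcal{C}_{n+1}^{(j)}$ should therefore carry additional pieces built from $\sum_k\frac{(1-p^2)u_{n+1}(1-pqu_k)}{u_{n+1}-u_k}\mathcal{C}_n^{(k)}$-type combinations, the coefficients being forced by the rational identity
\[
(1-pqu_{n+1})\,\frac{u_j-p^{-1}q}{1-pqu_j}-(u_{n+1}-p^{-1}q)=\frac{(1-q^2)(u_j-u_{n+1})}{1-pqu_j},
\]
which is exactly what turns the ``wrong'' denominator $1-pqu_{n+1}$ into the required $1-pqu_j$. Finding these correction terms, and verifying that the $\mathcal{C}_{n+1}^{(j)}$ still add up to $\mathcal{C}_{n+1}$, is the first real task.

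Granting the decomposition, each relation at level $n+1$ should collapse, block by block in $W_{n+1}$, onto the corresponding relation at level $n$: \eqref{rel3} because the relevant new block is nilpotent on $W_{n+1}$ and so squares to zero irrespective of the $W_1\otimes\cdots\otimes W_n$ part; \eqref{rel2} after invoking \eqref{rel2} at level $n$ together with the displayed identity; and \eqref{rel4} after using \eqref{rel3} at level $n$ to discard the diagonal contributions. I expect \eqref{rel4} to be the \emph{main obstacle}: its structure constant $\frac{(pu_j-q)(1-pqu_k)}{(pu_k-q)(1-pqu_j)}$ depends genuinely on both $j$ and $k$, so that after substituting the dressed operators one is left with a sum over a third index whose individual coefficients all have to be reconciled---this is where the correction terms and the level-$n$ braiding must telescope against one another. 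The overall minus sign in \eqref{rel4} is the free-fermionic (Felderhof) signature and should emerge automatically once the nilpotent $W_{n+1}$-blocks are commuted past the diagonal ones.

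A fallback route would be to extract all three relations directly from the Yang--Baxter/$RLL$ relation obeyed by the ``crossed'' monodromy matrix $\mathcal{T}_j(\{u\}_N)$ of \eqref{decomp}, reading off $\mathcal{C}_N^{(j)}$ as the residue at $u=u_j$ of a suitable generating current assembled from its $\mathcal{C}$-entry; this merely repackages the same combinatorics, the crux moving to the identification of the current and the control of the residue bookkeeping.
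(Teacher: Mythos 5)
Your inductive strategy is the right one, and your guess at the shape of the correction terms is in fact on target: the combination $\sum_{k}\frac{(1-p^2)u_{n+1}(1-pqu_k)}{u_{n+1}-u_k}\,\mathcal{C}_n^{(k)}$ you anticipate is, up to the factor $(1-q^2)$ and a left multiplication by $\mathscr{D}_n^{-1}$, exactly the matrix $H_n$ of \eqref{H-matrix} that the paper's proof uses. But as written your proposal is a plan rather than a proof: you explicitly defer the two decisive steps --- producing the level-$(n+1)$ decomposition and verifying \eqref{rel4} --- calling them the ``first real task'' and the ``main obstacle,'' and you offer no mechanism guaranteeing that the three-index telescoping you foresee in \eqref{rel4} actually closes. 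That unexecuted verification is a genuine gap, not a routine detail.

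The missing idea that makes the step from $n$ to $n+1$ tractable is a change of basis. Since the relations \eqref{rel2}--\eqref{rel4} are independent of the choice of basis, one conjugates everything by a block lower-triangular matrix $G_{n+1}=\bigl(\begin{smallmatrix} G_n & 0 \\ G_nH_n & G_n\end{smallmatrix}\bigr)$ chosen so that $\mathcal{D}_{n+1}$ becomes diagonal; the equation fixing $H_n$ is solved, via \eqref{rel2} at level $n$, precisely by the combination you wrote down. In this basis $\mathscr{D}_{n+1}$ is the diagonal matrix \eqref{D-matrix}, and after inserting \eqref{rel3} and \eqref{rel4} at level $n$ one finds $\mathscr{C}_{n+1}=\sum_{j=1}^{n+1}\mathscr{C}_{n+1}^{(j)}$ with, for $j\le n$, $\mathscr{C}_{n+1}^{(j)}$ a block-diagonal matrix whose two blocks are \emph{scalar} multiples $\frac{(u_{n+1}-p^2u_j)(1-pqu_{n+1})}{u_j-u_{n+1}}$ and $\frac{(u_{n+1}-p^{-1}q)(p^2u_j-u_{n+1})}{u_j-u_{n+1}}$ of $\mathscr{C}_n^{(j)}$, and $\mathscr{C}_{n+1}^{(n+1)}$ strictly upper triangular with block $(1-q^2)\mathscr{D}_n$, as in \eqref{C-matrix}. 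Because the dressing factors are scalars on each $W_{n+1}$-block, all three relations at level $n+1$ reduce immediately to the level-$n$ relations; in particular \eqref{rel4} needs no reconciliation over a third index. Without this diagonalization device (or an equivalent explicit construction of the $\mathcal{C}_{n+1}^{(j)}$ followed by the actual check), your argument does not yet establish the lemma, and your fallback through residues of a generating current, as you concede, only repackages the same unfinished combinatorics.
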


\begin{proof}
This can be shown by induction on $n$.  For $n=1$,  from \eqref{initialCD}
$\mathcal{D}_1$ is diagonal and one directly sees that the relations are valid.
For $n$, we assume that $\mathcal{D}_n$ is diagonalizable and write the 
corresponding diagonal matrix as $\mathscr{D}_n=G_n^{-1}\mathcal{D}_n G_n$. 
Also writing  $\mathscr{C}_n=G_n^{-1} \mathcal{C}_n G_n$ and 
$\mathscr{C}_n=\sum_{j=1}^{n} \mathscr{C}_n^{(j)}$, and  noting that
the algebraic relations above do not depend on the choice of basis, we suppose by the
induction hypothesis that the same relations are satisfied by $\mathscr{D}_n$
and $\mathscr{C}_n^{(j)}$. 

Now we shall show that they also hold for $n+1$. To this end, first we 
construct $G_{n+1}$. Noting from \eqref{reop1} that $\mathcal{D}_{n+1}$ is an 
upper triangular block matrix whose block diagonal elements are written in 
terms of $\mathcal{D}_n$, 
we assume that $G_{n+1}$ is written as
\begin{equation}
G_{n+1}=
\begin{pmatrix}
G_n &  0 \\
G_n H_n  & G_n
\end{pmatrix},
\label{G-matrix}
\end{equation}
where $2n\times 2n$ matrix $H_n$ remains to be determined. 
Using the induction hypothesis for $n$, one obtains
\begin{align}
&G_{n+1}^{-1}\mathcal{D}_{n+1} G_{n+1} \nonumber \\
=&
\begin{pmatrix}
(1-pqu_{n+1}) \mathscr{D}_n & 0 \\
(u_{n+1}-p^{-1}q)\mathscr{D}_n H_n
+(1-p^2)u_{n+1} \mathscr{C}_n
-(1-pqu_{n+1})H_n \mathscr{D}_n
& (u_{n+1}-p^{-1}q)\mathscr{D}_n
\end{pmatrix}.
\end{align}
The above matrix is guaranteed to be diagonal when 
\begin{equation}
(u_{n+1}-p^{-1}q)\mathscr{D}_n H_n
+(1-p^2)u_{n+1} \mathscr{C}_n
-(1-pqu_{n+1})H_n \mathscr{D}_n=0
\end{equation}
Utilizing the above relation and  recalling  $\mathscr{D}_n$
and $\mathscr{C}^{(j)}_n$ satisfy the relation same as that in \eqref{rel2}, 
one finds
\begin{align}
H_n=\mathscr{D}^{-1}_n\sum_{j=1}^n
\frac{(1-p^2)u_{n+1}(1-pqu_j) }
        {(1-q^2)(u_j-u_{n+1})} \mathscr{C}_n^{(j)}.
\label{H-matrix}
\end{align}
One thus obtains the diagonal matrix $\mathscr{D}_{n+1}$:
\begin{align}
\mathscr{D}_{n+1}=
\begin{pmatrix}
(1-pqu_{n+1})\mathscr{D}_n & 0 \\
0 & (u_{n+1}-p^{-1}q)\mathscr{D}_n
\end{pmatrix}.
\label{D-matrix}
\end{align}
The remaining task is to derive  $\mathscr{C}_{n+1}^{(j)}$ and
to prove the relations \eqref{rel2}--\eqref{rel4} hold for $n+1$.
Combining  \eqref{reop2}, \eqref{G-matrix} and \eqref{H-matrix},
and also inserting the relations \eqref{rel3} and \eqref{rel4},
one arrives at $\mathscr{C}_{n+1}=\sum_{j=1}^{n+1}\mathscr{C}_{n+1}^{(j)}$
where
\begin{align}
\mathscr{C}_{n+1}^{(j)}=
\begin{cases} \displaystyle
\frac{1}{u_j-u_{n+1}}
\begin{pmatrix}
(u_{n+1}-p^2 u_j)(1-pqu_{n+1}) \mathscr{C}_n^{(j)} & 0 \\
0 & (u_{n+1}-p^{-1}q)(p^2 u_j-u_{n+1}) \mathscr{C}_n^{(j)}
\end{pmatrix}  \\[6mm]
\text{ for $1\le j \le n$} \\[6mm]
\begin{pmatrix}
0  & (1-q^2)\mathscr{D}_n \\
0 & 0
\end{pmatrix}   \text{ for $j=n+1$}
\end{cases}.
\label{C-matrix}
\end{align}

Finally recalling that $\mathscr{D}_n$ and $\mathscr{C}_n^{(j)}$ 
are supposed to
satisfy the relations \eqref{rel2}--\eqref{rel4} and using the explicit
form of $\mathscr{D}_{n+1}$ \eqref{D-matrix} and $\mathscr{C}_{n+1}^{(j)}$ 
\eqref{C-matrix}, one sees they satisfy the same algebraic relations as those 
in \eqref{rel2}--\eqref{rel4} for $n+1$.
\end{proof}

Due to the algebraic relations \eqref{rel2}, \eqref{rel3}
and \eqref{rel4} in Lemma~\ref{algebra}, 
the matrix product form for the wavefunction \eqref{reov} can be rewritten
as
\begin{align}
\bra 
x_1 \cdots x_N |\psi(\{u\}_N)
\ket
=&\prod_{j=1}^N \Bigg( \frac{1-pqu_j}{u_j-p^{-1}q} \Bigg)^j
\Tr_{W^{\otimes N}}\left[
Q \mathcal{D}_N^{M-N}
\mathcal{C}_N^{(N)}
\dots\mathcal{C}_N^{(1)} \right] \nonumber \\
&\times \sum_{\sigma \in \mathfrak{S}_N} (-1)^\sigma
    \prod_{j=1}^N
  \Bigg( \frac{u_{\sigma(j)}-p^{-1}q}{1-pq u_{\sigma(j)}} \Bigg)^{x_j},
\label{wavefunctiontochu}
\end{align}
where $\mathfrak{S}_N$ is the symmetric group of order $N$.
One easily notes that \eqref{wavefunctiontochu}
can be further rewritten in the following determinant form:
\begin{align}
\bra 
x_1 \cdots x_N |\psi(\{u\}_N)
\ket
=&K \prod_{j=1}^N \Bigg( \frac{1-pqu_j}{u_j-p^{-1}q} \Bigg)^j
\mathrm{det}_N
 \Bigg( \Bigg( \frac{u_{j}-p^{-1}q}{1-pq u_{j}} \Bigg)^{x_k} \Bigg),
\label{predet}
\end{align}
where the prefactor $K$ given below remains to be determined:
\begin{align}
K=\Tr_{W^{\otimes N}}\left[
Q \mathcal{D}_N^{M-N}
\mathcal{C}_N^{(N)}
\dots\mathcal{C}_N^{(1)} \right].
\end{align}

In \eqref{predet},
we notice that the information of the particle configuration
$\{x_1, x_2,\dots,x_N \}$ is encoded in the determinant,
while the overall factor $K$ is independent of the configuration.
This fact allows us to determine the factor $K$ by evaluating
the overlap for a particular particle configuration. In fact,
we find the following explicit expression
for the case $x_j=j$ ($1\le j \le N$):
\begin{proposition}
The wavefunction $\bra 
x_1 \cdots x_N |\psi(\{u\}_N)
\ket$
for the case $x_j=j$ ($1\le j \le N$)
has the following form:
\begin{align}
\bra 
12 \cdots N |\psi(\{u\}_N)
\ket
=(1-q^2)^{N(N+1)/2} \prod_{j=1}^N
(u_j-p^{-1}q)^{M-N}
\prod_{1 \le j<k \le N}(u_k-p^2 u_j). \label{stepoverlap}
\end{align}
\end{proposition}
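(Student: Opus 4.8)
The plan is to compute the overlap $\bra 12\cdots N\mid\psi(\{u\}_N)\ket$ directly from the matrix-product/determinant expression already derived. Setting $x_j=j$ in \eqref{predet} collapses the determinant: the matrix becomes $\mathrm{det}_N\big((u_j-p^{-1}q)^k/(1-pqu_j)^k\big)$, which is a Vandermonde-type determinant in the variables $t_j:=(u_j-p^{-1}q)/(1-pqu_j)$ with an extra factor $\prod_j t_j$ pulled out of each column. Thus $\mathrm{det}_N\big(t_j^{\,k}\big)=\prod_{j=1}^N t_j\,\prod_{1\le j<k\le N}(t_k-t_j)$, and one checks that
\begin{align}
t_k-t_j=\frac{(u_k-p^{-1}q)(1-pqu_j)-(u_j-p^{-1}q)(1-pqu_k)}{(1-pqu_j)(1-pqu_k)}
=\frac{(1-p^{-1}q\cdot pq)(u_k-u_j)}{(1-pqu_j)(1-pqu_k)},\nn
\end{align}
so the numerator is proportional to $u_k-u_j$ with a $p,q$-dependent constant; combining this with the prefactor $\prod_j\big((1-pqu_j)/(u_j-p^{-1}q)\big)^j$ in \eqref{predet} produces a clean answer of the shape $(\text{const})\,K\,\prod_{1\le j<k\le N}(u_k-u_j)$ up to monomial factors in each $u_j$. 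So the entire content of the proposition reduces to pinning down the constant $K=\Tr_{W^{\otimes N}}\big[Q\,\mathcal{D}_N^{M-N}\mathcal{C}_N^{(N)}\cdots\mathcal{C}_N^{(1)}\big]$.

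The harder half is evaluating $K$ explicitly, and here I would exploit that $K$ is configuration-independent: rather than expanding the trace combinatorially, I would match \eqref{predet} against the known $q=0$ result, where the wavefunction is the Schur polynomial \cite{BBF}, to fix the overall normalization, and then argue that $K$ is a Laurent polynomial (indeed a product of the expected linear factors) in the $u_j$ and in $p,q$ whose degree is controlled by the recursions \eqref{reop1}–\eqref{reop2} and the explicit form \eqref{C-matrix}. Concretely, \eqref{C-matrix} gives $\mathscr{C}_{n+1}^{(j)}$ for $j\le n$ as $(u_j-u_{n+1})^{-1}$ times a diagonal $2\times2$ block with entries $(u_{n+1}-p^2u_j)(1-pqu_{n+1})\,\mathscr{C}_n^{(j)}$ and $(u_{n+1}-p^{-1}q)(p^2u_j-u_{n+1})\,\mathscr{C}_n^{(j)}$, and $\mathscr{C}_{n+1}^{(n+1)}$ in terms of $\mathscr{D}_n$; feeding these, together with \eqref{D-matrix}, into the trace and using $Q=\ket{1^N}\bra{0^N}$ to select a single matrix entry, one reads off both the power $(u_j-p^{-1}q)^{M-N}$ (from the $\mathcal{D}_N^{M-N}$ block acting on the relevant component) and the product $\prod_{j<k}(u_k-p^2u_j)$ (from accumulating the off-diagonal factors in \eqref{C-matrix}), while the $(u_j-u_{n+1})^{-1}$ denominators in \eqref{C-matrix} cancel against the $t_k-t_j$ Vandermonde factors computed above. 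The power $(1-q^2)^{N(N+1)/2}$ likewise comes from the $N$ factors $(1-q^2)$ in the $j=n+1$ block of \eqref{C-matrix}, summed as $1+2+\cdots+N$.

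The main obstacle I anticipate is bookkeeping in the trace that defines $K$: one must track how the nested diagonal blocks in \eqref{C-matrix} combine under the ordered product $\mathcal{C}_N^{(N)}\cdots\mathcal{C}_N^{(1)}$ and which tensor component survives after composing with $\mathcal{D}_N^{M-N}$ and the rank-one operator $Q$. The cleanest route is probably induction on $N$: assume the proposition's form for $N-1$, peel off the action on the last auxiliary space $W_N$ using the recursions, and verify that the extra factors produced are exactly $(1-q^2)^{N}\,(u_N-p^{-1}q)^{M-N}\cdot(\text{shift from }M-(N-1)\text{ to }M-N)\cdot\prod_{j<N}(u_N-p^2u_j)$, matching \eqref{stepoverlap}. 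Once $K$ is known, substituting back into \eqref{predet} with $x_j=j$ and simplifying the telescoping prefactor $\prod_j\big((1-pqu_j)/(u_j-p^{-1}q)\big)^j$ against the Vandermonde yields \eqref{stepoverlap} directly; this last simplification is routine and I would not belabor it.
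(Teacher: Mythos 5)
Your route is genuinely different from the paper's. The paper never touches \eqref{predet} or the trace $K$ when proving \eqref{stepoverlap}: it argues graphically that for $x_j=j$ the columns $N+1,\dots,M$ freeze, giving the factorization \eqref{stepoverlapfactorization} into a simple prefactor times the $N\times N$ domain wall boundary partition function $Z_N$, and then evaluates an inhomogeneous generalization of $Z_N$ by the Izergin--Korepin method (Lemma \ref{inhomogeneousdomainwall}: recursion, degree counting, initial condition) and takes the homogeneous limit. This independence from \eqref{predet} is the whole point of the proposition in the paper's logic, since it is the external input used afterwards to fix $K$. Your plan instead proves the special case by evaluating $K$ itself from the matrix--product trace; this is not circular (\eqref{predet} rests only on Lemma \ref{algebra}), and your third suggestion --- induction on $N$ using the block forms \eqref{D-matrix} and \eqref{C-matrix} --- does work: since only $\mathscr{C}_N^{(N)}$ has a nonzero upper-right block and all other factors are block diagonal, the matrix element $\bra 0^N|\mathscr{D}_N^{M-N}\mathscr{C}_N^{(N)}\cdots\mathscr{C}_N^{(1)}|1^N\ket$ telescopes to $K_{N-1}$ (with the same $M$) times explicit factors, so if carried through your argument would replace the Izergin--Korepin input entirely, which is a legitimate and rather self-contained alternative.

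That said, there are concrete gaps. First, your two ``soft'' ways of pinning down $K$ fail: $K$ carries nontrivial $q$-dependence (a factor $(1-q^2)^N$ and $q$-dependent linear factors raised to the power $M-N$), so matching the $q=0$ Schur limit cannot determine it, and ``$K$ is a Laurent polynomial of controlled degree'' is not an argument until the recursion is actually performed; the inductive trace evaluation is the only viable option you list, and it is precisely the hard part that you leave as a sketch (also, the $(1-q^2)^{N(N+1)/2}$ does not come from the $\mathscr{C}^{(n+1)}$-blocks ``summed as $1+2+\cdots+N$'': those give only $(1-q^2)^N$, the rest comes from the Vandermonde factors $t_k-t_j$). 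Second, you gloss over a basis-change point: \eqref{D-matrix} and \eqref{C-matrix} give the conjugated operators $\mathscr{D}_N=G_N^{-1}\mathcal{D}_NG_N$, $\mathscr{C}_N^{(j)}=G_N^{-1}\mathcal{C}_N^{(j)}G_N$, while $K$ is a trace against $Q=|1^N\ket\bra 0^N|$ in the original basis; you need $\bra 0^N|G_N=\bra 0^N|$ and $G_N^{-1}|1^N\ket=|1^N\ket$, which does follow from the triangular block form \eqref{G-matrix} with $G_1=\mathrm{Id}$, but must be checked and stated. Third, be aware that your computation, carried out, yields $(1-q^2)^{N(N+1)/2}\prod_j(1-pqu_j)^{M-N}\prod_{j<k}(u_k-p^2u_j)$, i.e.\ exactly Theorem \ref{felderhofwavefunction} specialized to $x_j=j$, whereas \eqref{stepoverlap} as printed has $(u_j-p^{-1}q)^{M-N}$; the two statements differ by an ordering/labelling convention for the monodromy product, so your route proves the version consistent with \eqref{predet}, and you should fix the convention explicitly rather than claim to recover \eqref{stepoverlap} verbatim.
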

\begin{proof}
We can easily show by its graphical description that
$\bra 
12 \cdots N |\psi(\{u\}_N)
\ket$
can be factorized as
\begin{align}
\bra 
12 \cdots N |\psi(\{u\}_N)
\ket
=\prod_{j=1}^N
(u_j-p^{-1}q)^{M-N}
Z_N(\{ u \}_N), \label{stepoverlapfactorization}
\end{align}
where $Z_N(\{ u \}_N)$ is the domain wall boundary partition function.
The domain wall boundary partition function
on an $M \times M$ lattice is defined as
\begin{align}
Z_M(\{ u \}_M)=\langle 1 \cdots M|
B(u_1) \cdots B(u_M)| \Omega \rangle,
\end{align}
where $M$ $B$-operators are inserted between the vacuum vector
$| \Omega \rangle$ and the state of particles
$\langle 1 \cdots M|=
{}_1\bra 1|\otimes\dots \otimes{}_M\bra 1|$.

One can show a more general result
for the domain wall boundary partition function
with inhomogeneities
\begin{align}
Z_M(\{ u \}_M|\{ v \}_M, \{ q \}_M)=\langle 1 \cdots M|
B(u_1|\{ v \}_M, \{ q \}_M) \cdots B(u_M|\{ v \}_M, \{ q \}_M)| \Omega \rangle,
\end{align}
where
\begin{align}
B(u|\{ v \}_M, \{ q \}_M)
={}_a \langle 0 |
L_{a N}(u/v_M, q_M) \cdots
L_{a 1}(u/v_1, q_1)| 1 \rangle_a.
\end{align}

\begin{lemma}{\rm cf. \cite{FCWZ}}
\label{inhomogeneousdomainwall}
The domain wall boundary partition function with inhomogeneities
has the following form.
\begin{align}
Z_M(\{ u \}_M|\{ v \}_M, \{ q \}_M)=\prod_{j=1}^M \frac{1-q_j^2}{v_j^{M-1}}\prod_{1 \le j<k \le M}
(v_k-q_j q_k v_j)(u_k-p^2 u_j). \label{inhomogeneousdomain}
\end{align}
\end{lemma}
Lemma \ref{inhomogeneousdomainwall} can be proved by
using the Izergin-Korepin technique, i.e.,
show that both hand sides of \eqref{inhomogeneousdomain}
satisfy the same recursive relation, initial condition
and the degree counting of polynomials.

Taking the homogeneous limit $q_j \to 1$, $v_j \to 1$ ($j=1,\cdots,M$)
of \eqref{inhomogeneousdomain}
and inserting into \eqref{stepoverlapfactorization}
gives \eqref{stepoverlap}.
\end{proof}
At last, Theorem \ref{felderhofwavefunction}
can be proved by
checking that it has the determinant form
\eqref{predet}
and satisfies the particular case
\eqref{stepoverlap}.

\section{Combinatorial identities}
In this section,
we derive combinatorial identities by investigating partition functions
in more detail.
We show that Cauchy identities are derived from scalar products,
while dual Cauchy identities are obtained from
domain wall boundary partition functions,
which we explain by XXZ-type models and Felderhof models respectively.
\subsection{Cauchy identities}
The scalar product \cite{KBI} between the arbitrary off-shell state vectors
is defined as
\begin{align}
\langle \psi(\{ u \}_N)| \psi(\{ v \}_N) \rangle
=\langle \Omega| \prod_{j=1}^N C(u_j)
\prod_{k=1}^N B(v_k)| \Omega \rangle
\label{SP}
\end{align}
with $u_j, v_k\in \mathbb{C}$.
Here we illustrate a way to derive a Cauchy identity
for the $\beta$-Grothendieck polynomials
from the scalar products of the
$q=0$ limit of the $L$-operator \eqref{betaLoperator}.
\begin{eqnarray}
L(u)=\left( 
\begin{array}{cccc}
u & 0 & 0 & 0 \\
0 & 0 & 1 & 0 \\
0 & 1 & -\beta^{-1}u-u^{-1} & 0 \\
0 & 0 & 0 & -\beta^{-1}u
\end{array}
\right). \label{betaLoperator}
\end{eqnarray}
First, let us recall the following correspondence between the
wavefunction constructed from the \eqref{betaLoperator}.
\begin{theorem} \label{th-wave} {\rm \cite{MS}}
The (off-shell) wavefunction and its dual wave-function
of the integrable five-vertex model \eqref{betaLoperator}
are, respectively, given by the Grothendieck polynomials as
\begin{align}
\bra x_1 \cdots x_N|\psi(\{ u \}_N) \ket&=(-\beta^{-1})^{N(N-1)/2}
\prod_{j=1}^N u_j^{M-1} G_\lambda({\bs z};\beta), 
\label{wavefunctionone} \\
\bra \psi(\{ u \}_N)|x_1 \cdots x_N \ket&=(-\beta^{-1})^{N(N-1)/2}
\prod_{j=1}^N u_j^{M-1} G_{\lambda^\vee}({\bs z};\beta),
\label{wavefunctiontwo}
\end{align}
where $z_j=-\beta^{-1}-u_j^{-2}$,  and
$\lambda=(\lambda_1,\dots,\lambda_N)$
($M-N \ge \lambda_1 \ge \cdots \ge \lambda_N \ge 0$)
and $\lambda^\vee=(\lambda_1^\vee,\dots,\lambda_N^\vee)$
($M-N \ge \lambda_1^\vee \ge \cdots \ge \lambda_N^\vee \ge 0$)
are the Young diagrams related to the particle configuration
$x=(x_1, \dots, x_N) $ as
$\lambda_j=x_{N-j+1}-N+j-1$ and $\lambda_j^\vee=M-N+j-x_j$,
respectively.
\end{theorem}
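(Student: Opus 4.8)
The plan is to follow the same matrix product strategy that was used for Theorem \ref{felderhofwavefunction}, applied now to the dual wavefunction $\bra \psi(\{u\}_N)| x_1 \cdots x_N\ket = \bra\Omega|\prod_{j=1}^N C(u_j)\,|x_1\cdots x_N\ket$ built from the five-vertex $L$-operator \eqref{betaLoperator}, which is the $q=0$ specialization of \eqref{qbetaLoperator}. First I would record the two assertions of the theorem as a single computation: \eqref{wavefunctionone} is exactly the $q=0$ case of the displayed chain of identities \eqref{correspondence}, so that half is already in hand from the explicit determinant evaluation given in the excerpt (one only needs to set $q=0$ in \eqref{qbetaGrothendieckpolynomials} and read off the result, which has been done). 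So the real content is \eqref{wavefunctiontwo}, the dual statement, together with the precise identification of the complementary Young diagram $\lambda^\vee_j = M-N+j-x_j$.

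For the dual wavefunction I would first transpose the graphical picture: acting with $C$-operators on $|x_1\cdots x_N\ket$ and contracting against $\bra\Omega|$ is, by the crossing/reflection symmetry of the five-vertex weights in \eqref{betaLoperator}, the same lattice configuration as the ordinary wavefunction read from the opposite corner. Concretely I would write $\bra\psi(\{u\}_N)|x_1\cdots x_N\ket = \Tr_{W^{\otimes N}}[\,\tilde Q\,\mathcal{A}_N^{x_1-1}\mathcal{B}_N\mathcal{A}_N^{x_2-x_1-1}\cdots\mathcal{B}_N\mathcal{A}_N^{M-x_N}\,]$ with $\tilde Q=|0^N\ket\bra 1^N|$, exactly parallel to \eqref{reov} but with the roles of the ``occupied'' and ``empty'' reference states exchanged, so that $\mathcal{D},\mathcal{C}$ are replaced by $\mathcal{A},\mathcal{B}$. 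Then I would establish, by the same induction on $n$ as in Lemma \ref{algebra}, a $q$-commuting algebra for $\mathcal{A}_n$ and a decomposition $\mathcal{B}_n=\sum_j\mathcal{B}_n^{(j)}$, and use it to collapse the trace into $\sum_{\sigma\in\mathfrak{S}_N}(-1)^\sigma\prod_j(\text{weight})^{M-x_{\sigma(j)}}$ times a configuration-independent prefactor. Since the exponent is now $M-x_j$ rather than $x_j$, the determinant that appears is $\det_N\big((\text{weight}_j)^{M-x_k}\big)$, and after pulling out $\prod_j(\text{weight}_j)^{M}$ and substituting $M-x_k = M-N + k - \lambda^\vee_{N-k+1} + \cdots$ — i.e. bookkeeping the passage from the $x$-labels to $\lambda^\vee$ via $\lambda^\vee_j=M-N+j-x_j$ — this determinant becomes the Jacobi--Trudi-type determinant defining $G_{\lambda^\vee}(\bs z;\beta)$ in \eqref{GR}, with $z_j=-\beta^{-1}-u_j^{-2}$ as before. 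The overall scalar is then fixed, as in the proof of Proposition above, by evaluating one reference configuration (the hole-consecutive one), which factorizes through a domain-wall partition function computable by the Izergin--Korepin recursion.

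The main obstacle I anticipate is not the determinant manipulation but getting the dual algebra and the combinatorial dictionary exactly right. In the five-vertex ($q=0$) limit several weights in \eqref{betaLoperator} vanish, so the $\mathcal{A}_n$/$\mathcal{B}_n$ recursion degenerates and one must check that $\mathcal{A}_n$ is still diagonalizable and that the analogue of \eqref{rel3} ($(\mathcal{B}_n^{(j)})^2=0$) survives the specialization — the nilpotency is what forces the antisymmetrization and hence the determinant, so if it failed the whole structure would collapse. The second delicate point is the index reversal: one must verify that the complementary diagram $\lambda^\vee$ defined by $\lambda^\vee_j=M-N+j-x_j$ is indeed a partition with $M-N\ge\lambda^\vee_1\ge\cdots\ge\lambda^\vee_N\ge 0$ given $1\le x_1<\cdots<x_N\le M$, and that it is this $\lambda^\vee$ (and not its reverse or its transpose) that the exponents $M-x_k$ produce in the determinant \eqref{GR}. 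Both are finite checks, but they are exactly where sign and labeling errors hide.

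Finally, I would remark that once \eqref{wavefunctiontwo} is in place, the pair \eqref{wavefunctionone}--\eqref{wavefunctiontwo} is precisely what is needed to feed the scalar product \eqref{SP} into a Cauchy identity for $\beta$-Grothendieck polynomials, which is the use to which the theorem is put in the next subsection; so it is worth phrasing the prefactors so that the $(-\beta^{-1})^{N(N-1)/2}\prod_j u_j^{M-1}$ factors match on both sides, allowing them to cancel cleanly against the corresponding factors coming from the $C$-operator side when the two wavefunctions are paired.
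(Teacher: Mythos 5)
Your proposal follows essentially the same route as the paper: \eqref{wavefunctionone} is just the $q=0$ specialization worked out explicitly in \eqref{correspondence}, and for the dual wavefunction the paper (which cites \cite{MS} rather than re-proving the theorem) relies on exactly the matrix-product representation plus domain-wall normalization you describe, as demonstrated in Section 3 through Lemma \ref{algebra}, the determinant form \eqref{predet} and the Izergin--Korepin evaluation of the special configuration. One bookkeeping caution: in the paper's labeling of \eqref{decomp} (where, as \eqref{initialCD} shows, $\mathcal{D}_N$ is the empty--empty quantum-space element), the dual trace involves $\mathcal{D}_N$ together with $\mathcal{B}_N$ rather than $\mathcal{A}_N$ with $\mathcal{B}_N$, but this convention slip does not affect the substance of your argument.
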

\noindent
Note that the Young diagram $\lambda^\vee$ is the complementary
part of the Young diagram $\lambda$ in the $N \times (M-N)$ rectangular
Young diagram.

Next, we recall that one
can show the following determinant form \cite{BoS,KBI,MS}.
\begin{theorem}{\label{scalarthm}}
The scalar product is given by a determinant form:
\begin{align}
\bra \psi(\{ u \}_N)| \psi(\{ v \}_N) \ket
=
\prod_{1 \le j <k \le N} \frac{1}{(u_j^2-u_k^2)(v_k^2-v_j^2)}
\mathrm{det}_N Q(\{ u \}_N|\{ v \}_N),
\label{generalscalar}
\end{align}
where $\{u\}_N$ and $\{v\}_N$ are arbitrary sets
of complex values (i.e. off-shell conditions), and
$Q$ 
is an $N \times N$ matrix with matrix elements
\begin{align}
Q(\{ u \}_N|\{ v \}_N)_{jk}=
\frac{ u_j^M (-\beta^{-1}v_k-v_k^{-1})^M v_k^{2(N-1)}
           -v_k^M (-\beta u_j-u_j^{-1})^M u_j^{2(N-1)}
         }
        {v_k/u_j-u_j/v_k}.
\label{element}
\end{align}
\end{theorem}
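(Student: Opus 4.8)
The plan is to compute the scalar product $\bra \psi(\{u\}_N)|\psi(\{v\}_N)\ket = \bra\Omega|\prod_j C(u_j)\prod_k B(v_k)|\Omega\ket$ by repeatedly commuting $C$-operators past $B$-operators using the $RLL$ relation \eqref{RLL} specialized to the $q=0$ five-vertex $L$-operator \eqref{betaLoperator}. First I would extract from \eqref{RLL} the commutation relations among the monodromy-matrix entries $A,B,C,D$; in particular the relation expressing $C(u)B(v)$ as a linear combination of $B(v)C(u)$ together with ``contact'' terms built from $A(u)D(v)-A(v)D(u)$ (or their analogues), with coefficients rational in $u,v$. Acting on $|\Omega\ket$ and $\bra\Omega|$, one has $A(u)|\Omega\ket$, $D(u)|\Omega\ket$, $\bra\Omega|A(u)$, $\bra\Omega|D(u)$ all proportional to the vacuum with explicit eigenvalues read off from the diagonal entries of $L(u)$ in \eqref{betaLoperator} (namely $u^{M}$ and $(-\beta^{-1}u)^{M}$, up to the dressing by the middle entries), while $C(u)|\Omega\ket = 0 = \bra\Omega|B(u)$. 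This reduces the scalar product to a sum over ways of pairing the $u$'s with the $v$'s, i.e. a sum over $S_N$, which one then recognizes as a determinant.

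The cleanest route to the determinant is Slavnov-type: I would show that the quantity on the right-hand side of \eqref{generalscalar}, with $Q$ given by \eqref{element}, satisfies the same defining properties as the scalar product, and invoke uniqueness. Concretely, the steps are: (i) symmetry separately in $\{u\}_N$ and in $\{v\}_N$ — the prefactor $\prod 1/[(u_j^2-u_k^2)(v_k^2-v_j^2)]$ times $\det_N Q$ is manifestly symmetric because swapping two columns of $Q$ or two rows of $Q$ changes both the determinant and the Vandermonde-type prefactor by the same sign; (ii) the analytic/pole structure: as a function of, say, $v_N$ the scalar product has at most simple poles at $v_N^2 = u_j^2$ coming from the commutation coefficients, and the proposed determinant formula has exactly these; (iii) matching residues, which gives a recursion reducing the $N$-point scalar product at $v_N^2\to u_N^2$ to the $(N-1)$-point one — here the entry \eqref{element} is designed so that the numerator $u_j^M(-\beta^{-1}v_k-v_k^{-1})^M v_k^{2(N-1)} - v_k^M(-\beta u_j - u_j^{-1})^M u_j^{2(N-1)}$ vanishes on the diagonal locus $v_k^2 = u_j^2$, making the would-be pole from $1/(u_j^2-u_k^2)$ in the prefactor cancel correctly; and (iv) an initial condition at $N=1$, $\bra\Omega|C(u)B(v)|\Omega\ket$, computed directly. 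By induction on $N$, properties (i)--(iv) pin down the scalar product uniquely, so it must equal the right-hand side of \eqref{generalscalar}.

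The main obstacle I anticipate is bookkeeping in step (i) of the first approach — organizing the combinatorial sum produced by dragging all $C$'s through all $B$'s, since each commutation spawns ``wanted'' and ``unwanted'' terms and one must check that the unwanted terms assemble into the off-diagonal part of $\det_N Q$ rather than genuinely spoiling the formula. Using the Izergin--Korepin/Slavnov uniqueness argument sidesteps most of this: the real work is then concentrated in verifying the residue recursion (iii), i.e. checking that the cofactor expansion of $\det_N Q$ along the row/column carrying $v_N$, after setting $v_N^2 = u_\ell^2$, reproduces $\det_{N-1}$ of the matrix $Q$ built from the remaining variables, up to the correct scalar factor coming from the vacuum eigenvalues and the commutation coefficient. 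That residue computation, while essentially linear-algebraic, is the step where the specific form of the numerator in \eqref{element} (in particular the exponents $M$ and $2(N-1)$ and the $\beta$-dressing $-\beta^{-1}v_k - v_k^{-1}$ versus $-\beta u_j - u_j^{-1}$) must be used in an essential way, and is where I would expect to spend most of the effort.
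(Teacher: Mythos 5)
The paper does not actually prove Theorem \ref{scalarthm}: it recalls the formula from \cite{BoS,KBI,MS}, where it is obtained by the direct algebraic-Bethe-ansatz computation you sketch as your first route (dragging the $C$'s through the $B$'s with the commutation relations coming from \eqref{RLL} at the five-vertex point, where the exchange relations are simple enough that the resulting sum closes into a determinant). So your first route is essentially the standard one, but you do not carry it out; the argument you actually commit to is the uniqueness route, and that is where there are genuine gaps.

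First, the characterization you propose (symmetry, simple poles at $v_N^2=u_j^2$, residue recursion in $N$, and the $N=1$ initial condition) does not determine the scalar product uniquely. As a function of $v_N$, $\bra \psi(\{u\}_N)|\psi(\{v\}_N)\ket$ is (up to an overall power of $v_N$) a Laurent polynomial whose degree grows with the lattice size $M$, since $B(v_N)$ is built from $M$ copies of \eqref{betaLoperator}, each linear in $v_N^{\pm 1}$; this is visible in \eqref{element}, whose entries contain $(-\beta^{-1}v_k-v_k^{-1})^M$. Matching the $N$ residues at $v_N^2=u_j^2$ plus symmetry leaves an undetermined Laurent-polynomial remainder of degree of order $M$, so induction on $N$ alone cannot pin the function down: you need in addition a degree count and data at special points (e.g. the behaviour as $v_N\to 0,\infty$), and it is exactly there that the $q=0$ (free-fermionic, five-vertex) structure must be used — the same characterization applied verbatim at generic $q$ would ``prove'' an off-shell determinant formula that is false for the XXZ chain, so any correct proof must invoke the specialization somewhere, and your sketch never does. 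Second, your concrete claim that the numerator of \eqref{element} vanishes on the locus $v_k^2=u_j^2$ is false: at $v_k=u_j=u$ it equals $u^{M+2(N-1)}\bigl[(-\beta^{-1}u-u^{-1})^M-(-\beta u-u^{-1})^M\bigr]\neq 0$ for generic $\beta$, because of the $\beta$ versus $\beta^{-1}$ asymmetry. The entry $Q_{jk}$ therefore genuinely has a simple pole at $v_k^2=u_j^2$ through the denominator $v_k/u_j-u_j/v_k$, and these are precisely the poles of the scalar product coming from the commutation coefficients; the poles of the prefactor in \eqref{generalscalar} are instead located at coinciding $u$'s or coinciding $v$'s and are cancelled by the vanishing of $\det_N Q$ when two rows or two columns coincide. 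So the pole bookkeeping in your step (iii) is misassigned, and the residue recursion has to be set up at the poles of the matrix entries rather than via a cancellation that does not occur. To repair the argument, either complete the direct commutation-relation computation as in the cited references, or supplement the uniqueness argument with the explicit degree bounds and special-point evaluations specific to \eqref{betaLoperator}.
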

The Cauchy formula for the $\beta$-Grothendieck polynomials can be
derived by combining Theorem \ref{th-wave} and \ref{scalarthm}.
The key is to substitute the completeness relation,
\begin{align}
\sum_{\{ x \}}|x_1 \cdots x_N \rangle \langle x_1 \cdots x_N |=\mathrm{Id},
\end{align}
and decompose the scalar product as
\begin{align}
\bra \psi(\{u\}_N)|\psi(\{v\}_N)\ket=\sum_{1\le x_1<\cdots<x_N\le M}
\bra \psi(\{u\}_N)|x_1\cdots x_N\ket\bra x_1\cdots x_N |\psi(\{v\}_N)\ket.
\label{SPdecomposition}
\end{align}
Using Theorem \ref{scalarthm} and \ref{th-wave} in the left and right hand side
of the equality \eqref{SPdecomposition} respectively,
one has the following Cauchy identity.
\begin{theorem}{\rm \cite{MS}}
The following Cauchy identity
for the Grothendieck polynomials holds.
\begin{align}
&\sum_{\lambda \subseteq L^N} G_\lambda({\bs z};\beta)
G_{\lambda^\vee}({\bs w};\beta)
\nonumber \\
=&\prod_{1 \le j<k \le N} \frac{1}{(z_j-z_k)(w_k-w_j)}
\mathrm{det}_N
\left[
\frac{z_j^{L+N}(1+\beta w_k)^{N-1}-w_k^{L+N}(1+\beta z_j)^{N-1}}
{z_j-w_k}
\right], \label{cauchy}
\end{align}
where the Young diagram
$\lambda^\vee=(\lambda_1^\vee,\dots,\lambda_N^\vee)$
is given by the Young diagram $\lambda=(\lambda_1,\dots,\lambda_N)$
as $\lambda_j^\vee=L-\lambda_{N+1-j}$.
\end{theorem}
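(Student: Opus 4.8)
The strategy is to combine the two main ingredients already available: the determinant form for the scalar product (Theorem \ref{scalarthm}), and the identification of the wavefunctions with Grothendieck polynomials (Theorem \ref{th-wave}), glued together via the completeness relation as in \eqref{SPdecomposition}. Concretely, one expands the scalar product $\bra \psi(\{u\}_N)|\psi(\{v\}_N)\ket$ in two different ways: on the one hand it equals the explicit determinant \eqref{generalscalar}--\eqref{element}; on the other hand, inserting $\sum_{\{x\}}|x_1\cdots x_N\ket\bra x_1\cdots x_N|=\mathrm{Id}$ turns it into a sum over particle configurations $1\le x_1<\cdots<x_N\le M$, each term of which factors as $\bra\psi(\{u\}_N)|x_1\cdots x_N\ket\,\bra x_1\cdots x_N|\psi(\{v\}_N)\ket$. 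Theorem \ref{th-wave} rewrites each factor as a Grothendieck polynomial, and one then equates the two expressions and cleans up the prefactors.

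\textbf{Key steps in order.}
First I would fix $M=L+N$ so that the rectangle housing the complementary Young diagrams is $N\times L$, matching the statement. Second, apply \eqref{wavefunctiontwo} to $\bra\psi(\{u\}_N)|x_1\cdots x_N\ket$ and \eqref{wavefunctionone} to $\bra x_1\cdots x_N|\psi(\{v\}_N)\ket$; the configuration sum $\sum_{1\le x_1<\cdots<x_N\le M}$ becomes the sum $\sum_{\lambda\subseteq L^N}$ over Young diagrams fitting in the $N\times L$ box, with the complementary diagram $\lambda^\vee$ defined by $\lambda_j^\vee=L-\lambda_{N+1-j}$ as in the statement. (Here one must check that the two combinatorial dictionaries — $\lambda_j=x_{N-j+1}-N+j-1$ from the first wavefunction and $\lambda_j^\vee=M-N+j-x_j$ from the dual — are genuinely complementary in the $N\times L$ rectangle; this is exactly the remark after Theorem \ref{th-wave}.) The right-hand side of \eqref{SPdecomposition} thus becomes
\[
(-\beta^{-1})^{N(N-1)}\prod_{j=1}^N u_j^{M-1} v_j^{M-1}
\sum_{\lambda\subseteq L^N} G_{\lambda^\vee}(\bs z;\beta)\,G_\lambda(\bs w;\beta),
\]
with $z_j=-\beta^{-1}-u_j^{-2}$ and $w_j=-\beta^{-1}-v_j^{-2}$. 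Third, substitute the determinant \eqref{generalscalar}--\eqref{element} for the left-hand side, and perform the change of variables $u_j\mapsto z_j$, $v_k\mapsto w_k$ inside the matrix elements $Q_{jk}$ and inside the Vandermonde-type prefactor $\prod (u_j^2-u_k^2)^{-1}(v_k^2-v_j^2)^{-1}$. Fourth, pull all the resulting monomial factors in $u_j,v_k$ (equivalently in $z_j,w_k$) out of the determinant using multilinearity — the matrix element \eqref{element} is manifestly a difference of two products, each of which carries extractable powers of $u_j$ and $v_k$ — and match them against the overall prefactor $(-\beta^{-1})^{N(N-1)}\prod u_j^{M-1}v_j^{M-1}$ produced on the combinatorial side, so that everything reduces to the clean determinant $\det_N\big[(z_j^{L+N}(1+\beta w_k)^{N-1}-w_k^{L+N}(1+\beta z_j)^{N-1})/(z_j-w_k)\big]$ over $\prod(z_j-z_k)(w_k-w_j)$. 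Fifth, relabel $\lambda\leftrightarrow\lambda^\vee$ (a bijection of the summation set) to bring the identity to the exact form \eqref{cauchy}, noting that $(\lambda^\vee)^\vee=\lambda$.

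\textbf{The main obstacle.}
The genuine bookkeeping hurdle is step four: the passage from the variables $u_j$ (in which the scalar-product determinant is naturally stated) to the Grothendieck variables $z_j=-\beta^{-1}-u_j^{-2}$, and the verification that all the parasitic prefactors — the $u_j^{M-1}v_j^{M-1}$ from the wavefunctions, the powers $u_j^{M}$ and $v_k^M$ and $(-\beta^{-1}v_k-v_k^{-1})^M=(-\beta^{-1})^M v_k^M(1+\beta v_k^2/\!\cdots)$-type terms inside $Q_{jk}$, the $v_k^{2(N-1)}$, and the denominator $v_k/u_j-u_j/v_k$ versus $z_j-w_k$ — conspire to leave precisely the stated determinant and no residual factor. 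One has to check that $z_j-w_k$ is, up to an explicit monomial in $u_j,v_k$, equal to $v_k/u_j-u_j/v_k$, and similarly that $z_j^{L+N}(1+\beta w_k)^{N-1}$ emerges (up to such a monomial) from $u_j^M(-\beta^{-1}v_k-v_k^{-1})^M v_k^{2(N-1)}$ after the substitution. This is purely elementary but must be done carefully; everything else is a direct assembly of results already in the excerpt.
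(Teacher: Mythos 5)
Your proposal is correct and follows essentially the same route as the paper: the paper likewise derives \eqref{cauchy} by inserting the completeness relation into the scalar product \eqref{SPdecomposition}, evaluating the left side with Theorem \ref{scalarthm} and the right side with Theorem \ref{th-wave}, and then passing from the spectral parameters to the variables $z_j=-\beta^{-1}-u_j^{-2}$, $w_k=-\beta^{-1}-v_k^{-2}$. The bookkeeping of prefactors you flag as the main obstacle is exactly the computation the paper leaves implicit, so your outline matches its proof.
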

\noindent
Here we have set $L=M-N$, but the above formula holds for any
$L \ge 0$.

\subsection{Dual Cauchy identities}
The dual Cauchy identities can be derived by
dealing with domain wall boundary partition functions \cite{Ko,Iz}
\begin{align}
&\langle 1 \cdots M|
B(u_1) \cdots B(u_M)| \Omega \rangle,
\end{align}
where $M$ $B$-operators are inserted between the vacuum vector
$| \Omega \rangle$ and the state of particles
$\langle 1 \cdots M|=
{}_1\bra 1|\otimes\dots \otimes{}_M\bra 1|$.
This class of partition function has found applications
to the enumeration of alternating sign matrices in the 1990s,
and it was only noticed in recent years to have applications
to the dual Cauchy identities \cite{BBF,BMN}.
We illustrate this by the Felderhof models.

First, we rewrite the wavefunctions in the following forms
\begin{align}
&\langle x_1 \cdots x_N|
B(u_1) \cdots B(u_N)| \Omega \rangle=
\prod_{j=1}^N \frac{(1-q^2)^M}{(1+pqw_j)^{M-1}} \nonumber \\
\times&
\prod_{1 \le j < k \le N}
\frac{
p^2q(p^2-1)w_j w_k+p(p^2-q^2)w_k+p(p^2q^2-1)w_j+q(p^2-1)
}
{
p(q^2-1)
}
s_\lambda({\bs w}), \label{wavefunctionfordualcauchy}
\end{align}
where we make transformation of variables from $u_j$ to
$\displaystyle w_j=\frac{u_j-p^{-1}q}{1-pqu_j}
$, and
\begin{align}
&\langle 1 \cdots M|
B(u_1) \cdots B(u_N)| \overline{x_1} \cdots \overline{x_N} \rangle=
\prod_{j=1}^N \frac{(-p^2)^{M-1}(1-q^2)^M}{(1+p^{-1}qz_j)^{M-1}} \nonumber \\
\times&
\prod_{1 \le j < k \le N}
\frac{
q(1-p^2)z_j z_k+p(q^2-p^2)z_k+p(1-p^2 q^2)z_j+p^2 q(1-p^2)
}
{
-p^5(q^2-1)
}
s_{\overline{\lambda}} \Bigg( \frac{{\bs z}}{-p^2} \Bigg),
\label{wavefunctionfordualcauchytwo}
\end{align}
where we also transform from $u_j$ to
$\displaystyle z_j=\frac{u_j-pq}{1-p^{-1}qu_j}
$.

The dual Cauchy identities is derived by
evaluating the domain wall boundary partition function
in two ways.

First, we evaluate the domain wall boundary partition function
by viewing it as a particular limit of the wavefunction \eqref{wavefunctionfordualcauchy}. One has
\begin{align}
&\langle 1 \cdots M|
B(u_1) \cdots B(u_M)| \Omega \rangle=
\prod_{j=1}^M \frac{(1-q^2)^M}{(1+pqw_j)^{M-1}} \nonumber \\
\times&
\prod_{1 \le j < k \le M}
\frac{
p^2q(p^2-1)w_j w_k+p(p^2-q^2)w_k+p(p^2q^2-1)w_j+q(p^2-1)
}
{
p(q^2-1)
}. \label{comparisonone}
\end{align}

Another way is to insert the completeness relation
\begin{align}
\sum_{\{ x \}}|x_1 \cdots x_N \rangle \langle x_1 \cdots x_N |=\mathrm{Id},
\end{align}
between the $B$-operators
\begin{align}
&\langle 1 \cdots M|
B(u_1) \cdots B(u_M)| \Omega \rangle \nonumber \\
=&\sum_{\{x \}} \langle 1 \cdots M|B(u_1) \cdots
B(u_{M-N}) |x_1 \cdots x_N \rangle \langle x_1 \cdots x_N|
B(u_{M-N+1}) \cdots
B(u_M)|\Omega \rangle \nonumber \\
=&\sum_{\{x \}} \langle 1 \cdots M|B(u_1) \cdots
B(u_{M-N}) |\overline{x_1} \cdots \overline{x_{M-N}} \rangle \langle x_1 \cdots x_N|
B(u_{M-N+1}) \cdots
B(u_M)|\Omega \rangle,
\label{comparisontwo}
\end{align}
and insert
\eqref{wavefunctionfordualcauchy} and \eqref{wavefunctionfordualcauchytwo}
into the right hand side of \eqref{comparisontwo}.
Comparing \eqref{comparisonone} and \eqref{comparisontwo},
we find
\begin{align}
\sum_{\lambda \subseteq (M-N)^N}
s_{\overline{\lambda}}\Bigg(\frac{{\bs z}}{-p^2}\Bigg)
s_\lambda( \overline{{\bs w}})
=
(-p^2)^{(N-M)N} \prod_{j=1}^{M-N} \Bigg( \frac{1+p^{-1}qz_j}{1+pqw_j} \Bigg)^{M-1} \frac{A}{BC}
, \label{dualcauchy}
\end{align}
where
\begin{align}
A=&\prod_{1 \le j < k \le M}
\frac{
p^2q(p^2-1)w_j w_k+p(p^2-q^2)w_k+p(p^2q^2-1)w_j+q(p^2-1)
}
{
p(q^2-1)
}, \\
B=&\prod_{M-N+1 \le j < k \le M}
\frac{
p^2q(p^2-1)w_j w_k+p(p^2-q^2)w_k+p(p^2q^2-1)w_j+q(p^2-1)
}
{
p(q^2-1)
}, \\
C=&\prod_{1 \le j < k \le M-N}
\frac{
q(1-p^2)z_j z_k+p(q^2-p^2)z_k+p(1-p^2 q^2)z_j+p^2 q(1-p^2)
}
{
-p(q^2-1)
},
\end{align}
and $\overline{{\bs w}}=\{w_{M-N+1},\dots,w_M \}$,
${\bs z}=\{z_1,\dots,z_{M-N} \}$.
Note also that the sum over all particle configurations $\{x \}$
is translated to the sum over all Young diagrams $\lambda$
satisfying $\lambda \subseteq (M-N)^N$.

\eqref{dualcauchy} is nothing but the dual Cauchy formula
for the Schur functions. In fact, if we set $q=0$ and $t=-p^2$,
\eqref{dualcauchy} becomes
\begin{align}
\sum_{\lambda \subseteq (M-N)^N}
s_{\overline{\lambda}}\Bigg(\frac{{\bs u}}{t}\Bigg)
s_\lambda( \overline{{\bs u}})
=
\prod_{j=1}^{M-N} \prod_{k=M-N+1}^M \Bigg(\frac{u_j}{t}+u_k \Bigg),
\end{align}
which becomes the celebrated dual Cauchy identity
by scaling ${\bs u}$ to $t {\bs u}$ \cite{BBF}.
See \cite{BMN,BW,BWZ,MSW}
for example for more results on this direction of research
of deriving other combinatorial identities,
by changing boundary conditions
for example.

\section{Other models, formulae and generalizations}
We give several remarks on  other integrable models.
\subsection{Combinatorial formula for the Schur polynomials}
First, we remark that there are other interesting six-vertex models.
For example, the following $L$-operator
\begin{align}
\mathcal{L}_{aj}(v)
=
\begin{pmatrix}
1-\beta v & 0 & 0 & 0 \\
0 & 1+\beta v & 2v & 0 \\
0 & 1 & v & 0 \\
0 & 0 & 0 & v
\end{pmatrix}_{aj}.
\label{sixvertexLoperator}
\end{align}
can be shown to be integrable.
In fact this $L$-operator is another special limit of the
generalized XXZ-type six-vertex models in section 2.
By investigating the wavefunction itself in detail,
one can find the following combinatorial formula
for the Schur polynomials.
\begin{theorem} \label{combinatorialformula} {\rm \cite{MS3}}
We have the following combinatorial formula for the Schur polynomials
\begin{align}
s_\lambda(\bs{z})=&
\frac{1}{\prod_{1 \le j < k \le N}(z_j+z_k+2 \beta z_j z_k)}
\sum_{x^{(N)} \succ x^{(N-1)} \succ \dots \succ x^{(0)}=\phi} \prod_{k=1}^N
\Bigg\{z_k^{\sum_{j=1}^k x_j^{(k)}-\sum_{j=1}^{k-1} x_j^{(k-1)}-1} \nonumber \\
\times&\left( \frac{2(1+\beta z_k)}{1+2 \beta z_k} \right)^{\#(x^{(k)}|x^{(k-1)})-1}
\prod_{j=1}^{k-1}
\left(
1+2 \beta z_k(1-\delta_{x_j^{(k-1)} x_{j+1}^{(k)}})
\right)
\Bigg\},
\end{align}
where $\beta$ is an arbitrary parameter.
$x^{(k)}=(x_1^{(k)},\dots,x_k^{(k)})$, $k=0,1,\dots,N$ are strict partitions
satisfying the interlacing relations
$x^{(N)} \succ x^{(N-1)} \succ \dots \succ x^{(0)}=\phi$,
and
$x^{(N)}$ is fixed by the Young diagram
$\lambda=(\lambda_1,\dots,\lambda_N)$ as
$\lambda_j=x_j^{(N)}-N+j-1$,
and $\#(y|x)$ denotes
the number of parts in $y$ which are not in $x$.
\end{theorem}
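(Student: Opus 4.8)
The plan is to extract this combinatorial formula from the wavefunction of the six-vertex model with $L$-operator \eqref{sixvertexLoperator} by evaluating the partition function in two different ways, exactly as was done for the Felderhof model in Section~3. First I would set up the monodromy matrix $T_a(v)=\prod_{j=1}^M \mathcal{L}_{aj}(v)$ and express the wavefunction $\bra x_1\cdots x_N|\psi(\{v\}_N)\ket$ as a multiple action of $B$-operators on the vacuum. The key structural observation is that \eqref{sixvertexLoperator}, being a special limit of the generalized XXZ $L$-operator of Section~2, should inherit a five-vertex-like structure, so that the wavefunction admits an iterated (row-by-row) evaluation rather than only a global determinant formula.

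Next I would carry out the lattice-path analysis. Reading the wavefunction column by column (or row by row) on the $M\times N$ lattice, the particle trajectories through the vertices define a sequence of strict partitions $x^{(0)}=\phi\prec x^{(1)}\prec\cdots\prec x^{(N)}$, where $x^{(k)}$ records the positions of the $k$ paths after processing the $k$-th auxiliary line, and $x^{(N)}$ encodes the boundary configuration $\lambda$ via $\lambda_j=x_j^{(N)}-N+j-1$. Summing over all admissible internal configurations, the Boltzmann weights factor line-by-line: each horizontal step on line $k$ contributes a power of $v_k$ (hence the exponent $\sum_j x_j^{(k)}-\sum_j x_j^{(k-1)}-1$), each ``creation'' vertex where a new path appears contributes the factor $2v_k/(v_k)\sim 2(1+\beta v_k)/(1+2\beta v_k)$ type weight counted by $\#(x^{(k)}|x^{(k-1)})-1$, and the remaining vertical/turning vertices contribute the product $\prod_{j}(1+2\beta z_k(1-\delta_{x_j^{(k-1)}x_{j+1}^{(k)}}))$. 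Carefully matching each vertex type in \eqref{sixvertexLoperator} to its combinatorial role in the interlacing data is the bookkeeping heart of the argument.

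To pin down the overall normalization I would, as in the Proposition preceding Theorem~\ref{felderhofwavefunction}, evaluate the wavefunction (or the associated domain wall boundary partition function) on the step configuration $x_j=j$; the Izergin--Korepin technique gives a closed product form, and comparison with the general determinant expression for the wavefunction of \eqref{sixvertexLoperator} fixes the prefactor $\prod_{1\le j<k\le N}(z_j+z_k+2\beta z_j z_k)^{-1}$. Finally, identifying the resulting expression with $s_\lambda(\bs z)$ follows by sending $\beta\to 0$, where the interlacing sum collapses to the classical Gelfand--Tsetlin / semistandard-tableau expansion of the Schur function, together with the fact (from Section~2) that this model's wavefunction is already known to give Schur polynomials up to the stated factor; the $\beta$-dependence is then a deformation that leaves the symmetric function invariant because $\beta$ only enters through the auxiliary-space gauge.

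The main obstacle I anticipate is the precise combinatorial dictionary in the second step: correctly accounting for which vertices of \eqref{sixvertexLoperator} produce the $2(1+\beta z_k)/(1+2\beta z_k)$ factor versus the $1+2\beta z_k(1-\delta)$ factors, and verifying that the product over internal lines telescopes to exactly the claimed weight without spurious cross-terms. This is a finite but delicate vertex-by-vertex check, and getting the Kronecker-delta structure $1-\delta_{x_j^{(k-1)}x_{j+1}^{(k)}}$ right — it encodes whether a path moves or stays between consecutive strict partitions — is where the argument could most easily go wrong; everything else (the $v_k$ power bookkeeping and the normalization via domain wall boundary partition function) is routine given the machinery already developed in Sections~2 and~3.
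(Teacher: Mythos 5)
Your lattice side of the argument (the row-by-row expansion of the wavefunction of \eqref{sixvertexLoperator} into a sum over interlacing strict partitions with vertex-by-vertex weights) is indeed one half of the proof and matches the intended mechanism. The genuine gap is in the other half: you never actually establish the closed evaluation of this wavefunction as $\prod_{1\le j<k\le N}(z_j+z_k+2\beta z_j z_k)\,s_\lambda(\bs z)$ (up to elementary single-variable factors), and the two moves you offer in its place do not work. First, the claim that ``sending $\beta\to 0$ the interlacing sum collapses to the classical Gelfand--Tsetlin / semistandard-tableau expansion'' is false: at $\beta=0$ the sum still runs over \emph{strict} partitions, still carries the factors $2^{\#(x^{(k)}|x^{(k-1)})-1}$, and the identity still has the denominator $\prod_{1\le j<k\le N}(z_j+z_k)$; what you land on is exactly the nontrivial statement that a Schur $Q$-type sum labeled by $\lambda+\delta$ factorizes into $\prod_{j<k}(z_j+z_k)$ times $s_\lambda(\bs z)$ when the number of parts equals the number of variables. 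This factorization (and its $\beta$-deformation) is precisely the essential fact the proof relies on, in the same way as the modern proofs of Tokuyama-type formulas; it cannot be dismissed as the ``classical'' tableau expansion. Second, the assertion that ``the $\beta$-dependence leaves the symmetric function invariant because $\beta$ only enters through the auxiliary-space gauge'' is unsubstantiated and, as stated, assumes the conclusion: $\beta$ is not removable from \eqref{sixvertexLoperator} by a gauge/twist acting trivially on wavefunctions (if it were, each term of the sum would be $\beta$-independent, which it manifestly is not); the cancellation of $\beta$ in the \emph{total} sum is the content of the theorem, not an input. Likewise, the appeal to ``the fact (from Section 2) that this model's wavefunction is already known to give Schur polynomials up to the stated factor'' is circular — no such statement is established there for \eqref{sixvertexLoperator}.

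Relatedly, your normalization step is too weak to supply what is missing. Fixing an overall constant from the step configuration $x_j=j$ via an Izergin--Korepin evaluation only works once one already knows that the wavefunction is a configuration-independent prefactor times an explicitly known $\lambda$-dependent function (as happens for the five-vertex and Felderhof models, where that function is a simple determinant of powers). For \eqref{sixvertexLoperator}, which is a free-fermionic six-vertex model with all six weights nonzero, identifying the $\lambda$-dependent part as $\prod_{1\le j<k\le N}(z_j+z_k+2\beta z_j z_k)\,s_\lambda(\bs z)$ is the substantive step; it must be proved either by recognizing the wavefunction as a specialization of a Schur $Q$-/factorial-type function and invoking the factorization when the length equals $N$, or by an Izergin--Korepin-type characterization of the full $\lambda$-dependent wavefunction (recursions in $M$ and in the particle positions, symmetry and degree counting in each $z_j$), not merely of one special configuration. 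Without that ingredient, comparing your combinatorial expansion with a normalization constant does not determine the $\lambda$-dependence, and the proposed proof does not close.
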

This type of formula of expressing Schur polynomials
using an additional parameter resembles, but is different from
the Tokuyama formula \cite{To,OkTo}.
The modern understanding of the Tokuyama formula comes
from the fact that Schur $Q$-functions factorizes
into an overall factor and Schur functions when the length
of the Young diagram which labels the Schur $Q$-functions
is the same with the number of symmetric variables
\cite{HK}. The proof of Theorem \ref{combinatorialformula}
also relies essentially on this fact.

\subsection{Boson model}

In this section, we remark the relation between the wavefunction of
an integrable boson model (nonhermitian phase model) \cite{BN}
and the Grothendieck polynomials.
The nonhermitian phase model is
a boson system characterized by the generators 
$\phi$, $\phi^\dagger$, $N$ and $\pi$ acting on a bosonic 
Fock space $\mathcal{F}$ spanned by orthonormal basis  
$| n \ket \ (n=0,1,\dots, \infty)$. Here the number $n$ 
indicates the occupation number of bosons.  The generators 
$\phi$, $\phi^\dagger$, $N$ and $\pi$ are, respectively, 
the annihilation, creation, number and vacuum projection operators,
whose actions on $\mathcal{F}$ are, respectively, defined as
\begin{align}
\phi|0 \ket=0,  \quad 
\phi|n \ket=|n-1 \ket, \quad
\phi^\dagger|n \ket=|n+1 \ket, \quad 
N|n \ket=n|n \ket, \quad
\pi|n \ket=\delta_{n\,0}|n \ket.
\end{align}
Thus the operator forms are explicitly given by
\begin{align}
\phi=\sum_{n=0}^\infty |n \ket \bra n+1|, \quad
\phi^\dagger=\sum_{n=0}^\infty |n+1 \ket \bra n|, \quad
N=\sum_{n=0}^\infty n |n \ket \bra n|, \quad
\pi=|0 \ket \bra 0|.
\end{align}
These operators generate an algebra referred to as the phase algebra:
\begin{align}
[\phi, \phi^\dagger]=\pi, \quad
[N, \phi]=-\phi, \quad
[N, \phi^\dagger]=\phi^\dagger.
\end{align}
We consider the tensor product of Fock spaces
$\otimes_{j=0}^{M-1} \mathcal{F}_j$, whose basis is given by
$|\{ n \}_M \ket :=
\otimes_{j=0}^{M-1}|n_j \ket_j$, $n_j=0,1,\dots,\infty$.
We denote a dual state of $|\{ n \}_M \ket$ as
$\bra \{ n \}_M| := \otimes_{j=0}^{M-1} {}_j \bra n_j|$.
The operators $\phi_j$, $\phi_j^\dagger$, $N_j$ and $\pi_j$
act on the Fock space $\mathcal{F}_j$ as $\phi$, 
$\phi^\dagger$, $N$ and $\pi$,
and the other Fock spaces $\mathcal{F}_k, k \neq j$ as
an identity.

The $L$-operator for the nonhermitian phase model \cite{BN} is given by
\begin{align}
\mathcal{L}_{a j}(v)=
\begin{pmatrix}
v^{-1}-\beta v \pi_j & \phi_j^\dagger  \\
\phi_j & v
\end{pmatrix},
\label{Lop-boson}
\end{align}
acting on the tensor product $W_a \otimes \mathcal{F}_j$ of the
complex two-dimensional space $W_a$ and the Fock space at the
$j$th site $\mathcal{F}_j$.
The $L$-operator satisfies the intertwining relation ($RLL$-relation)
\begin{align}
R_{ab}(u/v)\mathcal{L}_{a j}(u)\mathcal{L}_{b j}(v)=
\mathcal{L}_{b j}(v)\mathcal{L}_{a j}(u)R_{ab}(u/v),
\label{RLL-boson}
\end{align}
which acts on $W_a \otimes W_b \otimes \mathcal{F}_j$.
The $R$-matrix $R(u)$ is the same as the one for the integrable
five-vertex model which is a $q=0$ limit of the $U_q(sl_2)$ $R$-matrix
\eqref{XXZRmatrix}. The auxiliary space $W_a$ is the
complex two-dimensional space, which is the same as that for the
integrable five-vertex model, while the quantum space $\mathcal{F}_j$
is the infinite-dimensional bosonic Fock space.

From the $L$-operator, we construct the monodromy matrix
\begin{align}
\mathcal{T}_{a}(v)=\mathcal{L}_{a M-1}(v) \cdots \mathcal{L}_{a 0}(v)
=
\begin{pmatrix}
\mathcal{A}(v) & \mathcal{B}(v)  \\
\mathcal{C}(v) & \mathcal{D}(v)
\end{pmatrix}_{a}, 
\label{bosonmonodromy}
\end{align}
which acts on $W_a \otimes (\mathcal{F}_0\otimes\dots\otimes 
\mathcal{F}_{M-1})$.

The arbitrary $N$-particle state $|\Psi(\{v \}_N) \ket$
(resp. its dual $\bra \Psi(\{v \}_N)|$) 
(not normalized) with $N$ spectral parameters
$\{ v \}_N=\{ v_1,\dots,v_N \}$
is constructed by a multiple action
of $\mathcal{B}$ (resp. $\mathcal{C}$) operator on the vacuum state 
$|\Omega \ket:= | 0^{M} \ket:=|0\ket_0\
\otimes \dots \otimes |0\ket_{M-1}$
(resp. $\bra \Omega| := \bra 0^{M}|:=
{}_0\bra 0|\otimes\dots \otimes{}_{M-1}\bra 0|$):
\begin{align}
|\Psi(\{v \}_N) \ket=\prod_{j=1}^N \mathcal{B}(v_j)| \Omega \ket,
\quad
\bra \Psi(\{v \}_N)|=\bra \Omega| \prod_{j=1}^N
\mathcal{C}(v_j).
\label{statevector-B}
\end{align}

The orthonormal basis of
the $N$-particle state $|\Psi(\{v \}_N) \ket$
and its dual $\bra \Psi(\{v \}_N)|$
is given by $| \{ n \}_{M,N} \ket := |n_0\ket_0\
\otimes \dots \otimes |n_{M-1}\ket_{M-1}$
and $\bra \{ n \}_{M,N}| := _{0}\bra n_0| \otimes \dots
\otimes _{M-1} \bra n_{M-1}|$, where $n_0+n_1+\cdots+n_{M-1}=N$.
The wavefunctions can be expanded in this basis as
\begin{align}
&|\Psi(\{v \}_N) \ket
=\sum_{\substack{0 \le n_0,\dots,n_{M-1} \le N \\
n_0+\cdots+n_{M-1}=N}}
\bra \{ n \}_{M,N}|\psi(\{v \}_N) \ket
|\{ n \}_{M,N} \ket, \\
&\bra \Psi(\{v \}_N)|
=\sum_{\substack{0 \le n_0, \dots,n_{M-1} \le N \\
n_0+\cdots+n_{M-1}=N}}
\bra \{ n \}_{M,N}| \bra \psi(\{v \}_N)| \{ n \}_{M,N} \ket.
\end{align}
There is a one-to-one correspondence between
the set $\{ n \}_{M,N}=\{n_0,n_1,\dots,n_{M-1} \}$ ($n_0+n_1+\cdots+n_{M-1}=N$)
and the Young diagram $\lambda=(\lambda_1,\lambda_2,\dots,\lambda_N)$
($M-1 \ge \lambda_1 \ge \lambda_2 \ge \cdots \ge \lambda_N \ge 0$).
Namely, each Young diagram $\lambda$ under the constraint
$\ell(\lambda) \le N$, $\lambda_1 \le M-1$
can be labeled by a set of integers $\{ n \}_{M,N}$ as
$\lambda=((M-1)^{n_{M-1}}, \dots, 1^{n_1},0^{n_0})$.
We have the following correspondence between the wavefunctions
and the $\beta$-Grothendieck polynomials.
\begin{theorem} \label{bosontheorem} {\rm \cite{MS}}
The wavefunctions can be expressed by the Grothendieck polynomials as
\begin{align}
& \bra \{ n \}_{M,N}|\Psi(\{v \}_N) \ket
=\prod_{j=1}^N (v_j^{-1}-\beta v_j)^{M-1}G_\lambda(z_1,\dots,z_N;\beta), 
\label{Grothen-B1} \\
&\bra \Psi(\{v \}_N)| \{ n \}_{M,N} \ket
=\prod_{j=1}^N (v_j^{-1}-\beta v_j)^{M-1}
G_{\lambda^\vee}(z_1,\dots,z_N;\beta) \label{Grothen-B2},
\end{align}
where $z_j^{-1}=v_j^{-2}-\beta$ and $\lambda^\vee=(\lambda_1^\vee,\lambda_2^\vee,\dots,\lambda_N^\vee)$ ($M-1 \ge \lambda_1^\vee \ge \cdots \ge \lambda_N^\vee \ge 0$) is given by the Young diagram $\lambda$ as
$\lambda_j^\vee=M-1-\lambda_{N+1-j}$.
\end{theorem}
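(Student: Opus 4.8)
The plan is to prove Theorem~\ref{bosontheorem} by the same two-step mechanism used for the Felderhof model in Theorem~\ref{felderhofwavefunction}: recast the wavefunction in a matrix product form that is manifestly the determinant \eqref{GR} (for the Young diagram $\lambda$ attached to $\{n\}_{M,N}$) times an as-yet-undetermined prefactor, and then pin down the prefactor by specializing to one particle configuration, where the wavefunction degenerates to a domain wall boundary partition function. Concretely, introducing $N$ auxiliary copies $W_1,\dots,W_N$ of the two-dimensional space, one writes, in parallel with \eqref{overlap},
\begin{align}
\bra\{n\}_{M,N}|\Psi(\{v\}_N)\ket
=\Tr_{W^{\otimes N}}\left[Q\,\bra\{n\}_{M,N}|\prod_{a=1}^N\mathcal{T}_a(v_a)|\Omega\ket\right],
\qquad Q=|1^N\ket\bra 0^N|,
\nonumber
\end{align}
switches the order of the factors to $\prod_{a=1}^N\mathcal{T}_a(v_a)=\prod_{j=0}^{M-1}\mathcal{T}_j(\{v\}_N)$ with $\mathcal{T}_j(\{v\}_N)=\prod_{a=1}^N\mathcal{L}_{aj}(v_a)\in\End(W^{\otimes N}\otimes\mathcal{F}_j)$, and takes the Fock-space matrix element ${}_j\bra n_j|\,\cdot\,|0\ket_j$ at each site $j$ — legitimate since each $\mathcal{T}_j$ acts on $\mathcal{F}_j$ only among the quantum spaces. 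This presents $\bra\{n\}_{M,N}|\Psi(\{v\}_N)\ket$ as $\Tr_{W^{\otimes N}}$ of a product of the entries of the transposed monodromy, with the occupation numbers — i.e.\ the parts of $\lambda$ — entering only as exponents of the diagonal blocks.

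Next I would read off the recursion in $n$ for these operators from \eqref{Lop-boson}, as in \eqref{reop1}--\eqref{reop2}, and prove the phase-model counterpart of Lemma~\ref{algebra}: a splitting of the relevant hopping operator into nilpotent pieces $\mathcal{C}^{(j)}_n$ obeying a $q$-deformed Heisenberg-type algebra with the diagonal operator, with structure constants assembled from $v_j^{-1}-\beta v_j$ and $v_j$. Commuting all the hopping pieces to one side inside the trace collapses the result into $\sum_{\sigma\in\mathfrak{S}_N}$, which, after the substitution $z_j^{-1}=v_j^{-2}-\beta$, is exactly $\mathrm{det}_N\big(z_j^{\lambda_k+N-k}(1+\beta z_j)^{k-1}\big)$ times a configuration-independent factor $K$. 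Since $K$ is independent of $\{n\}_{M,N}$, I would evaluate it on an extreme configuration — say $\lambda=\emptyset$, all bosons at the last site — where the wavefunction factorizes into single-site weights times the domain wall boundary partition function of the phase model; evaluating the latter by the Izergin--Korepin recursion as in Lemma~\ref{inhomogeneousdomainwall}, and comparing with $G_\emptyset(\bs z;\beta)=1$, fixes $K=\prod_{j=1}^N(v_j^{-1}-\beta v_j)^{M-1}$, which is \eqref{Grothen-B1}.

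For the dual identity \eqref{Grothen-B2} I would repeat the computation with $\mathcal{C}(v_j)$ replacing $\mathcal{B}(v_j)$ — equivalently, reflect the chain of Fock sites left-to-right, which replaces each occupation pattern by the one read from the other end and hence sends $\lambda$ to its complement $\lambda^\vee_j=M-1-\lambda_{N+1-j}$ in the $N\times(M-1)$ box — so that the same determinant manipulation now produces $G_{\lambda^\vee}(\bs z;\beta)$ with the identical prefactor.

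The step I expect to be hardest is the second one: verifying that the hopping and diagonal operators of the phase model genuinely close into the deformed algebra (the analogue of Lemma~\ref{algebra}) and that the emergent $\mathfrak{S}_N$-sum coincides on the nose with the Grothendieck determinant \eqref{GR} after the change of variables, rather than with something merely similar. The infinite-dimensionality of $\mathcal{F}_j$ forces some care in tracking the matrix elements ${}_j\bra n_j|\,\cdot\,|0\ket_j$, though it introduces no real analytic difficulty because only finitely many terms contribute for each $\{n\}_{M,N}$; a lesser task is establishing the product form of the phase-model domain wall partition function needed in the third step.
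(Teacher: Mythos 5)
Note first that the paper does not prove Theorem~\ref{bosontheorem} at all: it is quoted from \cite{MS}, and the only proof actually carried out in the text is the Felderhof one (Theorem~\ref{felderhofwavefunction}, via the matrix-product trace \eqref{reov}, the operator algebra of Lemma~\ref{algebra}, and the fixing of the prefactor on a frozen configuration). Your outline transplants exactly that template to the phase model, which is a reasonable strategy; the trace formula with $Q=|1^N\rangle\langle 0^N|$ and the reordering $\prod_a\mathcal{T}_a(v_a)=\prod_j\mathcal{T}_j(\{v\}_N)$ go through verbatim because the auxiliary spaces are still two-dimensional, and your reflection argument for \eqref{Grothen-B2} can indeed be made rigorous (the $L$-operator \eqref{Lop-boson} is invariant under simultaneous transposition in both spaces, so $\mathcal{C}(v)$ is the $\mathcal{B}$-operator of the site-reversed chain, which sends $\lambda$ to $\lambda^\vee$).

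The genuine gap is your treatment of multiply occupied sites, i.e.\ of partitions with repeated parts, which is precisely what distinguishes the boson model from the Felderhof and five-vertex cases. In \eqref{reov} the quantum space is two-dimensional, so the trace collapses into strings of $\mathcal{D}$'s separated by single $\mathcal{C}$ insertions, and Lemma~\ref{algebra} (relations for $\mathcal{D}_n$ and the nilpotent pieces $\mathcal{C}_n^{(j)}$) suffices. For the phase model the site-$j$ insertion is the Fock matrix element $\langle n_j|\mathcal{T}_j(\{v\}_N)|0\rangle_j$, and when $n_j\ge 2$ this is a new operator: it is neither a power of the diagonal ($n=0$) block nor a power of the single-particle insertion, so your statement that the occupation numbers enter ``only as exponents of the diagonal blocks'' is incorrect --- the exponents of the diagonal operator count empty sites, while the $n_j$ label which higher insertion appears. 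Your proposed analogue of Lemma~\ref{algebra} covers only the $n=1$ operators, so as written the collapse to the determinant \eqref{GR} is unestablished exactly for $\lambda$ with equal parts; you would need, in addition, the commutation of the higher insertions with the diagonal operator and their decomposition into ordered products $\mathcal{C}^{(j_1)}\cdots\mathcal{C}^{(j_n)}$ with the correct $\beta$-dependent scalar factors (these factors are what ultimately produce the $(1+\beta z_j)^{k-1}$ columns of the Grothendieck determinant), or else an altogether different route such as induction on $N$ via the branching rule for the action of $\mathcal{B}(v)$ on Fock states, or a bijection onto the five-vertex model of \cite{MS2}. Two smaller points: $\lambda=\emptyset$ corresponds to all $N$ bosons at site $0$, not at the last site, and the resulting overlap is a frozen-configuration computation (requiring its own argument that no create-then-annihilate terms survive) rather than a bona fide domain wall boundary partition function as in \eqref{stepoverlapfactorization}.
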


We remark that there exists a $q$-deformation of the
nonhermitian phase model introduced in \cite{SW}.
The six-vertex model \eqref{sixvertexLoperator} 
in the former section can also be regarded
as a certain degeneration point $q=i$
of the integrable $q$-boson $L$-operator.
We also remark that besides the quantum inverse scattering approach,
there is another approach to this model
based on the affine Hecke algebra \cite{Ta,Ta2}.
See \cite{MS,BN,SW,Bo,SU,Tsilevich,KS,Ta,Ta2}
for more about the integrable boson models,
the correspondence
between the wavefunctions and the
Grothendieck, Hall-Littlewood polynomials and their generalizations,
and its relation with the Verlinde algebra etc, for example.

\subsection{Generalized Felderhof models
and Generalized factorial Schur functions}

Second, we remark that the correspondence
between the Felderhof models and Schur functions \cite{BBF}
was generalized to factorial Schur functions \cite{BMN}.
We furthermore generalized this correspondence
and find the following:
First we introduce the following generalized $L$-operator
for the Felderhof model
\begin{eqnarray}
L_{aj}(z,t,\alpha_j,\gamma_j)=\left( 
\begin{array}{cccc}
1-\gamma_j z & 0 & 0 & 0 \\
0 & t+\gamma_j z & 1 & 0 \\
0 & (t+1)z & \alpha_j+(1-\alpha_j \gamma_j)z & 0 \\
0 & 0 & 0 & -t \alpha_j+(1-\alpha_j \gamma_j)z
\end{array}
\right), \label{generalizedfelderhofloperator}
\end{eqnarray}
and introduce the $N$-particle state
\begin{align}
\Phi(\{ z \}_N,t,\{ \alpha \},\{ \gamma \}) \rangle
=
B(z_1,t,\{ \alpha \},\{ \gamma \}) \cdots
B(z_N,t,\{ \alpha \},\{ \gamma \})| \Omega \rangle,
\end{align}
where
\begin{align}
B(z,t,\{ \alpha \},\{ \gamma \})
={}_a \langle 0 |
L_{a N}(z,t,\alpha_N,\gamma_N) \cdots
L_{a 1}(z,t,\alpha_1,\gamma_1)| 1 \rangle_a.
\end{align}
We found the correspondence between the wavefunction
$\langle x_1 \dots x_N|\Phi(\{ z \}_N,t,\{ \alpha \},\{ \gamma \}) \rangle$
and the generalized factorial Schur functions defined below.
\begin{definition}
We define the generalized factorial Schur functions
to be the following determinant:
\begin{align}
s_\lambda({\bf z}|\{ \alpha \}|\{ \gamma \})
=\frac{F_{\lambda+\delta}({\bf z}|\{ \alpha \}|\{ \gamma \})}
{\prod_{1 \le j < k \le N}(z_j-z_k)},
\end{align}
where ${\bf z}=\{z_1,\dots,z_N \}$ is a set of variables
and $\lambda$ denotes a Young diagram
$\lambda=(\lambda_1,\lambda_2,\dots,\lambda_N)$
with weakly decreasing non-negative integers
$\lambda_1 \ge \lambda_2 \ge \cdots \ge \lambda_N \ge 0$,
and $\delta=(N-1,N-2,\dots,0)$.
$F_{\mu}({\bf z}|\{ \alpha \}|\{ \gamma \})$
is an $N \times N$ determinant
\begin{align}
F_{\mu}({\bf z}|\{ \alpha \}| \{ \gamma \})
=\mathrm{det}_N
(
f_{\mu_j}(z_k|\{ \alpha \}|\{ \gamma \})
),
\end{align}
where
\begin{align}
f_\mu(z|\{ \alpha \}|\{ \gamma \})
=\prod_{j=1}^\mu(\alpha_j+(1-\alpha_j \gamma_j)z)
\prod_{j=\mu+2}^M(1-\gamma_j z).
\end{align}
\end{definition}

\begin{theorem}
The wavefunction
$\langle x_1 \dots x_N|\Phi(\{ z \}_N,t,\{ \alpha \},\{ \gamma \}) \rangle$
is expressed by the generalized factorial Schur functions
$s_\lambda({\bf z}|\{ \alpha \}|\{ \gamma \})$ as
\begin{align}
\langle x_1 \dots x_N|\Phi(\{ z \}_N,\{ \alpha \},\{ \gamma \}) \rangle
=\prod_{1 \le j<k \le N}(z_j+tz_k)
s_\lambda({\bf z}|\{ \alpha \}|\{ \gamma \}),
\end{align}
under the relation $\lambda_j=x_{N-j+1}-N+j-1$, $j=1,\dots,N$.
\end{theorem}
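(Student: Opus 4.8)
The plan is to follow the same route by which Theorem~\ref{felderhofwavefunction} was proved --- the matrix product method combined with the domain wall boundary partition function, as in \cite{MS2}: first rewrite the wavefunction as a trace over a tensor product of $N$ auxiliary spaces and, by an operator algebra of the type of Lemma~\ref{algebra}, reduce it to a determinant times a configuration-independent prefactor; then fix that prefactor by evaluating the single configuration $x_j=j$ (that is, $\lambda=\varnothing$), which is governed by a domain wall boundary partition function of Izergin--Korepin type. The one genuinely new feature is that the parameters $\alpha_j,\gamma_j$ in \eqref{generalizedfelderhofloperator} carry the quantum-site label $j$, so that the monodromy matrices $\mathcal{T}_j(\{z\}_N):=\prod_{a=1}^{N}L_{aj}(z_a,t,\alpha_j,\gamma_j)$ built on different quantum spaces are genuinely distinct operators --- which is exactly why the determinant that will appear is built from the nontrivial functions $f_\mu(z|\{\alpha\}|\{\gamma\})$ rather than from simple powers.

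Concretely, with $Q=|1^N\rangle\langle 0^N|$ I would write
\[
\langle x_1\cdots x_N|\Phi(\{z\}_N,t,\{\alpha\},\{\gamma\})\rangle
=\Tr_{W^{\otimes N}}\Big[Q\,\langle x_1\cdots x_N|\,\prod_{j=1}^{M}\mathcal{T}_j(\{z\}_N)\,|\Omega\rangle\Big]
\]
exactly as in \eqref{overlap}--\eqref{reov}, so that the wavefunction becomes a trace over $W^{\otimes N}$ of an ordered string built from the operators $\mathcal{D}_N^{[j]}$ (quantum site $j$ empty) and $\mathcal{C}_N^{[j]}$ (quantum site $j$ occupied), with $\mathcal{C}_N^{[j]}$ inserted at $j=x_1,\dots,x_N$ and the superscript recording $\alpha_j,\gamma_j$. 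The recursions for $\mathcal{C}_n^{[j]},\mathcal{D}_n^{[j]}$ in the number $n$ of spectral parameters are the obvious analogues of \eqref{reop1}--\eqref{reop2} with the entries of \eqref{generalizedfelderhofloperator} put in, and the crucial step is to establish, by induction on $n$ through a block-triangular diagonalisation as in the proof of Lemma~\ref{algebra}, a decomposition $\mathcal{C}_n^{[j]}=\sum_{l=1}^{n}\mathcal{C}_n^{[j],(l)}$ satisfying relations of the form
\begin{gather*}
(\mathcal{C}_n^{[j],(l)})^2=0,\qquad
\mathcal{C}_n^{[j],(l)}\,\mathcal{D}_n^{[j']}
=\frac{\alpha_{j'}+(1-\alpha_{j'}\gamma_{j'})z_l}{1-\gamma_{j'}z_l}\,\mathcal{D}_n^{[j']}\,\mathcal{C}_n^{[j],(l)},\\
\mathcal{C}_n^{[j],(l)}\,\mathcal{C}_n^{[j'],(l')}=c(z_l,z_{l'},t)\,\mathcal{C}_n^{[j'],(l')}\,\mathcal{C}_n^{[j],(l)}\qquad(l\neq l').
\end{gather*}
The structure constant appearing when a $\mathcal{C}^{(l)}$ is commuted past a $\mathcal{D}^{[j']}$ is precisely the ratio of two neighbouring factors of $f_\mu(z|\{\alpha\}|\{\gamma\})$, so that sweeping a single $\mathcal{C}_N^{[x_k],(l)}$ across the chain accumulates exactly $f_{x_k-1}(z_l|\{\alpha\}|\{\gamma\})$ up to configuration-independent pieces, while the fact that the exchange constant $c$ of the $\mathcal{C}^{(l)}$'s depends only on $z_l,z_{l'},t$ (and not on the quantum sites) is what makes the sum over orderings collapse into a single determinant and, once the bookkeeping is done, into the overall factor $\prod_{1\le j<k\le N}(z_j+tz_k)$. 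Using $(\mathcal{C}^{(l)})^2=0$ to cut the expansion of the operator string down to permutations and pushing all the $\mathcal{D}$'s to one end, the wavefunction takes the form $K'\,\mathrm{det}_N\!\big(f_{x_j-1}(z_k|\{\alpha\}|\{\gamma\})\big)$ with $K'$ independent of $\{x_j\}$, i.e.\ (up to a sign) a $\{x_j\}$-independent constant times $\prod_{1\le j<k\le N}(z_j-z_k)\,s_\lambda({\bf z}|\{\alpha\}|\{\gamma\})$ under $\lambda_j=x_{N-j+1}-N+j-1$.

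It remains to pin down $K'$. Being configuration-independent, it can be read off at $x_j=j$, where the lattice factorises: the last $M-N$ columns carry only the empty weight and contribute $\prod_{j=N+1}^{M}\prod_{a=1}^{N}(1-\gamma_j z_a)$, while the first $N$ columns form an $N\times N$ domain wall boundary partition function in $z_1,\dots,z_N$ with the two families of inhomogeneities $\alpha_1,\gamma_1,\dots,\alpha_N,\gamma_N$. As in Lemma~\ref{inhomogeneousdomainwall}, this partition function is determined by the Izergin--Korepin characterisation --- a one-step recursion in $N$, the $N=1$ value, and a degree bound in each $z_a$ --- and comparing its explicit form with the matching specialisation $\mathrm{det}_N\!\big(f_{N-j}(z_k|\{\alpha\}|\{\gamma\})\big)=\prod_{1\le j<k\le N}(z_j-z_k)\,s_\varnothing({\bf z}|\{\alpha\}|\{\gamma\})$ of the determinant fixes $K'$ and produces precisely the prefactor $\prod_{1\le j<k\le N}(z_j+tz_k)$, completing the proof.

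The step I expect to be the main obstacle is the first one: running the inductive construction behind Lemma~\ref{algebra} with genuinely site-dependent $\alpha_j,\gamma_j$. One must check that the diagonal blocks of $\mathcal{D}_{n+1}^{[j]}$ are still proportional to $\mathcal{D}_n^{[j]}$ so that the block-triangular diagonalising ansatz closes (the auxiliary matrix $H_n^{[j]}$ now also carries the site data), and --- most delicately --- that after diagonalisation the exchange constant between two $\mathcal{C}^{(l)}$'s remains a function of the spectral parameters and $t$ alone; this site-independence is exactly what separates the clean determinantal answer from a far messier object. Once that algebra is secured, the domain-wall step is routine, being the inhomogeneous extension of Lemma~\ref{inhomogeneousdomainwall}.
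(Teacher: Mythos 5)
Your proposal follows essentially the same route that the paper itself uses to prove the analogous Theorem \ref{felderhofwavefunction} and only sketches for this generalized statement (details being deferred to \cite{MSW}): matrix-product rewriting of the wavefunction, a site-decorated analogue of Lemma \ref{algebra} giving $K'\,\mathrm{det}_N\bigl(f_{x_k-1}(z_j|\{\alpha\}|\{\gamma\})\bigr)$ with a configuration-independent prefactor, and fixing $K'$ at $x_j=j$ through an inhomogeneous domain-wall partition function treated by Izergin--Korepin as in Lemma \ref{inhomogeneousdomainwall}, with the genuinely new point (site-dependent $\alpha_j,\gamma_j$ entering only the $\mathcal{C}$--$\mathcal{D}$ exchange weights while the $\mathcal{C}^{(l)}$--$\mathcal{C}^{(l')}$ exchange constant stays site-independent) correctly identified. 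One small sharpening: besides $(\mathcal{C}_n^{[j],(l)})^2=0$ you need the cross-site vanishing $\mathcal{C}_n^{[j],(l)}\mathcal{C}_n^{[j'],(l)}=0$ for a repeated spectral index $l$, since in the operator string the $\mathcal{C}$'s sit at distinct sites and it is this relation that truncates the expansion to permutations; it follows from the same inductive block structure as in \eqref{C-matrix}.
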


Based on this correspondence, one can show the following
dual Cauchy formula for the generalized factorial Schur functions.

\begin{theorem}
The following dual Cauchy formula holds
for the generalized factorial Schur functions
with sets variables ${\bf x}=\{x_1,\dots,x_N \}$, ${\bf y}=\{y_1,\dots,y_M \}$,
$\{ \alpha \}=\{\alpha_1,\dots,\alpha_{N+M} \}$,
$\{ \gamma \}=\{\gamma_1,\dots,\gamma_{N+M} \}$,
\begin{align}
\sum_{\lambda \subseteq M^N}s_\lambda({\bf x}|\{\alpha \}|\{\gamma \})
s_{\hat{\lambda}}({\bf y}|\{-\alpha \}|\{-\gamma \})
=\prod_{j=1}^N \prod_{k=1}^M (x_j+y_k)
\prod_{1 \le j<k \le N+M}(1+\alpha_j(\gamma_k-\gamma_j)),
\end{align}
where
$\{ -\alpha \}=\{-\alpha_1,\dots,-\alpha_{N+M} \}$,
$\{ -\gamma \}=\{-\gamma_1,\dots,-\gamma_{N+M} \}$, and
$\hat{\lambda}
=(\hat{\lambda}_1,\dots,\hat{\lambda}_M)$ is the partition
of the Young diagram $\lambda=(\lambda_1,\dots,\lambda_N)$ given by
\begin{align}
\hat{\lambda}_i=|\{j \ | \ \lambda_j  \le M-i \}|.
\end{align}
\end{theorem}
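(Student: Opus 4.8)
The plan is to follow the pattern of Section~4.2: realize the right-hand side as a domain wall boundary partition function for the generalized Felderhof model \eqref{generalizedfelderhofloperator} and evaluate that partition function in two ways. Concretely, work on a lattice with $N+M$ columns carrying the inhomogeneity pairs $(\alpha_j,\gamma_j)$, $j=1,\dots,N+M$, and consider $Z:=\langle 1\cdots(N+M)|\,B(y_1)\cdots B(y_M)\,B(x_1)\cdots B(x_N)\,|\Omega\rangle$, the analogue for the generalized model of the domain wall partition function of Lemma~\ref{inhomogeneousdomainwall}, where the first $M$ of the $B$-operators carry $\{y\}_M$ and the last $N$ carry $\{x\}_N$.

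First I would evaluate $Z$ directly. Exactly as in Lemma~\ref{inhomogeneousdomainwall}, the Izergin--Korepin technique applies to the $L$-operator \eqref{generalizedfelderhofloperator}: one shows that $Z$ obeys a recursion under removal of one row together with an appropriately chosen column, the one-site initial condition, and a degree bound in each spectral parameter, which together determine $Z$ uniquely and give a closed product. Under the intended specialization this product splits into a ``spectral'' factor, whose cross term between the two groups of variables is $\prod_{j=1}^{N}\prod_{k=1}^{M}(x_j+y_k)$ and whose within-group parts are $\prod_{1\le j<k\le N}(x_j+tx_k)$ and the $\mathbf{y}$-analogue, times an ``inhomogeneity'' factor equal to $\prod_{1\le j<k\le N+M}\bigl(1+\alpha_j(\gamma_k-\gamma_j)\bigr)$.

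The second evaluation inserts the completeness relation $\sum_{\{x\}}|x_1\cdots x_N\rangle\langle x_1\cdots x_N|=\mathrm{Id}$ between the two blocks of $B$-operators. The right block $\langle x_1\cdots x_N|\,B(x_1)\cdots B(x_N)\,|\Omega\rangle$ is precisely the wavefunction of the wavefunction--generalized-factorial-Schur correspondence stated above, with the lattice now having $N+M$ columns so that $\lambda\subseteq M^N$; hence it equals $\prod_{1\le j<k\le N}(x_j+tx_k)\,s_\lambda(\mathbf{x}|\{\alpha\}|\{\gamma\})$. The left block is a ``hole'' wavefunction, in which the $M$ operators $B(y_j)$ fill the $M$ empty sites of the configuration up to $\langle 1\cdots(N+M)|$; one needs an auxiliary lemma, proved by the particle--hole/reflection symmetry of the model that exchanges $|0\rangle\leftrightarrow|1\rangle$ in the quantum spaces and induces $(\alpha_j,\gamma_j)\mapsto(-\alpha_j,-\gamma_j)$ together with a reversal of the column order and a change of spectral variable, identifying this block with a prefactor times $s_{\hat{\lambda}}(\mathbf{y}|\{-\alpha\}|\{-\gamma\})$, where $\hat{\lambda}$ is the transpose of the complement of $\lambda$ in the $N\times M$ rectangle, i.e.\ $\hat{\lambda}_i=|\{\,j\mid\lambda_j\le M-i\,\}|$. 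The sum over particle configurations thereby becomes the sum over $\lambda\subseteq M^N$.

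Equating the two evaluations of $Z$ and cancelling the common prefactors---the within-$\mathbf{x}$ and within-$\mathbf{y}$ products and all $t$-dependence---yields exactly the asserted identity. The step I expect to be the main obstacle is the hole-wavefunction lemma together with the attendant bookkeeping: one must verify that the spectral and inhomogeneity prefactors of the two wavefunctions combine with the direct evaluation so that the $t$-dependence cancels completely and the surviving cross term is precisely $\prod_{j=1}^N\prod_{k=1}^M(x_j+y_k)$; the Izergin--Korepin computation itself, though lengthy, is routine. A more self-contained alternative would bypass the lattice model: use the standard bijection between partitions $\lambda\subseteq M^N$ and pairs of complementary subsets of $\{0,1,\dots,N+M-1\}$, write both factors through their bialternant definitions, collapse the sum by a Laplace (Cauchy--Binet) expansion into a single $(N+M)\times(N+M)$ determinant $\mathrm{det}\bigl(f_{\mu}(\xi_a)\bigr)$ with $\xi_1,\dots,\xi_{N+M}$ ranging over $\mathbf{x}\cup\mathbf{y}$, and then evaluate that determinant directly from the explicit product form of $f_\mu$.
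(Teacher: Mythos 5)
Your proposal follows essentially the approach the paper intends: the paper states this theorem without a written-out proof (details are deferred to \cite{MSW}), but the method it points to is exactly the Section~4.2 template, namely evaluating the domain wall boundary partition function of the generalized model \eqref{generalizedfelderhofloperator} on an $(N+M)$-column inhomogeneous lattice in two ways --- once directly (Izergin--Korepin, as in Lemma~\ref{inhomogeneousdomainwall}) and once by inserting the completeness relation so that the two blocks become the particle wavefunction and a hole wavefunction with parameters $(-\alpha_j,-\gamma_j)$ --- which is precisely what you do. The ingredients you single out as remaining work (the hole-wavefunction lemma and the factorized evaluation with the $t$-dependent prefactors cancelling) are exactly the analogues of the statements the paper proves for the Felderhof case, so your outline matches the intended proof rather than offering a genuinely different route.
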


It is also known that changing the boundary condition
of the partition functions
to the Tsuchiya boundary condition \cite{Ts},
which is a class of partition function with reflecting end,
what appears is the symplectic Schur functions
\cite{Iv,BBCG}.
We have also generalized this to a more general setting,
and find the following dual Cauchy formula for the
generalized factorial symplectic Schur functions.

\begin{definition}
We define the generalized symplectic Schur functions to be the
following determinant:
\begin{align}
sp_\lambda({\bf z}|\{ \widetilde{\alpha} \}|\{ \widetilde{\gamma} \})
=\frac{G_{\lambda+\delta}({\bf z}|\{ \widetilde{\alpha} \}|\{ 
\widetilde{\gamma} \})}
{\mathrm{det}_N(z_k^{N-j+1}-z_k^{-N+j-1})}.
\end{align}
Here, $G_{\mu}({\bf z}|\{ \widetilde{\alpha} \}|\{ \widetilde{\gamma} \})$
is an $N \times N$ determinant
\begin{align}
G_{\mu}({\bf z}|\{ \widetilde{\alpha} \}|\{ \widetilde{\gamma} \})
=\mathrm{det}_N
(
g_{\mu_j}(z_k|\{ \widetilde{\alpha} \}|\{ \widetilde{\gamma} \})
-
g_{\mu_j}(z_k^{-1}|\{ \widetilde{\alpha} \}|\{ \widetilde{\gamma} \})
),
\end{align}
where
\begin{align}
g_\mu(z|\{ \widetilde{\alpha} \}|\{ \widetilde{\gamma} \})
=\prod_{j=0}^\mu(\alpha_j+(1-\alpha_j \gamma_j)z)
\prod_{j=\mu+2}^M(1-\gamma_j z)
\prod_{j=1}^M(1-\gamma_jz^{-1}).
\end{align}
\end{definition}

\begin{theorem}
The following dual Cauchy formula holds
for the generalized factorial symplectic Schur functions
with sets variables ${\bf x}=\{x_1,\dots,x_N \}$, ${\bf y}=\{y_1,\dots,y_M \}$,
$\{ \widetilde{\alpha} \}=\{\alpha_0,\dots,\alpha_{N+M} \}$,
$\{ \widetilde{\gamma} \}=\{\gamma_0,\dots,\gamma_{N+M} \}$,
\begin{align}
&\sum_{\lambda \subseteq M^N}sp_\lambda({\bf x}|\{ \widetilde{\alpha} \}|\{
\widetilde{\gamma} \})
sp_{\hat{\lambda}}({\bf y}|\{- \widetilde{\alpha} \}|\{- \widetilde{\gamma} \}) \nonumber \\
=&\prod_{j=1}^M y_j^{-N}
\prod_{j=1}^N \prod_{k=1}^M (1+x_j y_k)(1+x_j^{-1} y_k)
\prod_{0 \le j<k \le N+M}(1+\alpha_j(\gamma_k-\gamma_j))
\prod_{1 \le j<k \le N+M}(1-\gamma_j \gamma_k),
\end{align}
where
$\{ -\widetilde{\alpha} \}=\{-\alpha_0,\dots,-\alpha_{N+M} \}$,
$\{ -\widetilde{\gamma} \}=\{-\gamma_0,\dots,-\gamma_{N+M} \}$
and
$\hat{\lambda}
=(\hat{\lambda}_1,\dots,\hat{\lambda}_M)$ is the partition
of the Young diagram $\lambda=(\lambda_1,\dots,\lambda_N)$ given by
\begin{align}
\hat{\lambda}_i=|\{j \ | \ \lambda_j  \le M-i \}|.
\end{align}
\end{theorem}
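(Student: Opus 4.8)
The plan is to realize both sides of the identity as two evaluations of a single partition function of the generalized Felderhof model on a lattice with a reflecting end, i.e.\ with the Tsuchiya boundary condition \cite{Ts}, exactly in the spirit of Section~4.2 where the ordinary domain wall boundary partition function produced the dual Cauchy formula for Schur functions. Concretely, one takes $N+M$ rows built from the $L$-operator \eqref{generalizedfelderhofloperator}, closes the right end with a $K$-matrix solving the reflection (boundary Yang--Baxter) equation compatible with \eqref{generalizedyangbaxter}, and considers the resulting domain-wall-with-reflecting-end partition function $\mathcal{Z}_{N+M}$ in which the first $N$ of the ``column'' spectral parameters carry the variables $x_1,\dots,x_N$ (together with the parameters $\widetilde\alpha,\widetilde\gamma$) and the remaining $M$ carry $y_1,\dots,y_M$.

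First I would set up the symplectic analogue of the wavefunction correspondence used above: the reflecting-end wavefunction $\langle x_1\cdots x_N|\Phi\rangle$ equals, up to the prefactor $\prod_{1\le j<k\le N}(z_j+tz_k)$ and the symplectic Vandermonde denominator $\det_N(z_k^{N-j+1}-z_k^{-N+j-1})$ hidden in the definition of $sp_\lambda$, the generalized factorial symplectic Schur function $sp_\lambda({\bf x}|\{\widetilde\alpha\}|\{\widetilde\gamma\})$. Inserting the completeness relation $\sum_{\{x\}}|x_1\cdots x_N\rangle\langle x_1\cdots x_N|=\mathrm{Id}$ between the block of $N$ rows carrying the $x$-data and the block of $M$ rows carrying the $y$-data splits $\mathcal{Z}_{N+M}$ as a sum over particle configurations; the $x$-half reproduces $sp_\lambda({\bf x}|\{\widetilde\alpha\}|\{\widetilde\gamma\})$, and passing to the complementary hole configuration in the $y$-half --- which is precisely what flips $\alpha_j\mapsto-\alpha_j$, $\gamma_j\mapsto-\gamma_j$ in the $L$-operator and sends $\lambda$ to $\hat\lambda_i=|\{j\mid\lambda_j\le M-i\}|$ --- reproduces $sp_{\hat\lambda}({\bf y}|\{-\widetilde\alpha\}|\{-\widetilde\gamma\})$. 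The sum over all configurations is the sum over $\lambda\subseteq M^N$, so this evaluation gives the left-hand side.

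Second I would evaluate the same $\mathcal{Z}_{N+M}$ directly by the Izergin--Korepin technique in Tsuchiya's reflecting-end version: show that it is symmetric in the $x$'s and in the $y$'s (a consequence of \eqref{generalizedyangbaxter} and the reflection equation), that specializing one spectral parameter to a zero of an entry of \eqref{generalizedfelderhofloperator} freezes a row and yields a closed recursion in $N+M$, that its degree in each variable is bounded, and that the frozen-vertex value at the special point is an explicit monomial; these data determine $\mathcal{Z}_{N+M}$ uniquely. The product so obtained is the right-hand side: the factor $\prod_{j,k}(1+x_jy_k)(1+x_j^{-1}y_k)$ is the contribution of an $x$-line crossing a $y$-line and of its mirror image across the reflecting end, the factor $\prod_{0\le j<k\le N+M}(1+\alpha_j(\gamma_k-\gamma_j))$ is the $\widetilde\alpha,\widetilde\gamma$-normalization already present in the non-reflecting case (the previous theorem), and the new factor $\prod_{1\le j<k\le N+M}(1-\gamma_j\gamma_k)$ together with the prefactor $\prod_j y_j^{-N}$ is the extra contribution of the $K$-matrix and of the reflected column variables $z\mapsto z^{-1}$. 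Matching the two evaluations gives the claimed dual Cauchy formula.

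The main obstacle will be the bookkeeping of gauge, sign and normalization factors: one must check that the scalars produced by the matrix-product/Izergin--Korepin side exactly absorb the two wavefunction prefactors and the symplectic Vandermonde denominators from the definition of $sp$, and that the $t$-dependence cancels entirely, as it must since the right-hand side is $t$-independent. Two supporting points also need care and form the genuine technical core: proving the symplectic wavefunction--$sp_\lambda$ correspondence with the correct identification of $\hat\lambda$, of the index shift from $\prod_{j=1}$ to $\prod_{j=0}$, and of the parameter sign flips; and verifying that the Tsuchiya recursion truly closes on $\mathcal{Z}_{N+M}$ without spurious boundary terms. Everything else is a transcription of the Schur-case argument of Section~4.2, now with a reflecting end; an alternative, purely determinantal proof via Cauchy--Binet applied to the size-$(N+M)$ matrix obtained by juxtaposing the two defining determinants is also available should the lattice bookkeeping become unwieldy.
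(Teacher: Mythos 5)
Your proposal follows essentially the route the paper itself indicates for this theorem: realize both sides as two evaluations of a generalized Felderhof partition function with the Tsuchiya reflecting-end boundary condition, inserting the completeness relation between the $x$-block and the $y$-block of $B$-operators exactly as in the Schur-case dual Cauchy derivation of Section~4.2. Note, however, that the paper states this result with all details deferred to \cite{MSW}, so only the strategy can be compared, and yours matches it --- with the genuine technical core (the reflecting-end wavefunction/$sp_\lambda({\bf z}|\{\widetilde\alpha\}|\{\widetilde\gamma\})$ correspondence, the sign-flip/hole-configuration identification of $\hat\lambda$, and the Izergin--Korepin-type evaluation of the reflecting-end partition function) flagged but not carried out, just as in the paper.
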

Details of the theorems on this subsection will be given elsewhere \cite{MSW}.

\section{Conclusion}
In this paper, we reviewed and presented new results on
the relation between integrable lattice models and
combinatorial representation theory of symmetric polynomials.

The philosophy of statistical mechanics
is to investigate macroscopic quantities from
microscopic description of the system.
For the case of integrable lattice models,
this means to study symmetric polynomials
as partition functions constructed
from the local $L$-operators and global boundary conditions.
The correspondence between symmetric polynomials
and partition functions
 allows us to investigate
and find various new combinatorial formulae which seems
to be extremely hard to discover or to prove
without integrable models.

Some of the correspondence is essentially equivalent to the notions
in the field of algebraic combinatorics,
especially Schubert calculus.
To make further progresses on the representation theory
of symmetric polynomials, we think the point of view
of quantum integrability is indispensable.
For example, the wavefunction \eqref{qbetaGrothendieckpolynomials}
is ``the" (not just ``a")
$q$-deformation of the $\beta$-Grothendieck polynomials,
which the parameter $q$ is nothing but the parameter
for the quantum group $U_q(sl_2)$,
and we believe quantum integrability is essential
to find this deformation.
We expect further advances will be made from
the interplay between quantum integrable models
and representation theory in the future.

\section*{Acknowledgments}
This work was partially supported by
grant-in-Aid for Research Activity start-up
No. 15H06218 and
Scientific Research (C) No. 24540393.


\begin{thebibliography}{00}
%%%%%%%%%%%%%%%%%%%%%%%%%%%%%%%%%%%%%%%%%%%%%%%%%%%%%%%%%%%%
%
\bibitem{Bethe}
Bethe H
{\it Z. Phys.} {\bf 71} (1931) 205.
%
\bibitem{FST}
Faddeev L D, Sklyanin E K and Takhtajan E K
{\it Theor. Math. Phys.} {\bf 40} (1979) 194.
%
\bibitem{Baxter}
Baxter R J
{\it Exactly Solved Models in Statistical Mechanics},
Academic Press, London (1982).
%
\bibitem{KBI}
Korepin V E, Bogoliubov N M and Izergin A G
{\it Quantum Inverse Scattering Method and Correlation functions},
Cambridge University Press, Cambridge (1993).
%
\bibitem{Dr}
Drinfeld V
{\it Sov. Math.-Dokl.} {\bf 32} (1985) 254.
%
\bibitem{J}
Jimbo M
{\it Lett. Math. Phys.} {\bf 10} (1985) 63.
%
\bibitem{Bre}
Bressoud D
{\it Proofs and confirmations: The story of the
alternating sign matrix conjecture},
MAA Spectrum, Mathematical Association of America, Washington, DC, (1999).
%
\bibitem{Ku}
Kuperberg G
{\it Int. Math. Res. Not.} {\bf 3} (1996) 139.
%
\bibitem{Ku2}
Kuperberg G
{\it Ann. Math.} {\bf 156} (2002) 835.
%
\bibitem{Ok}
Okada S
{\it J. Alg. Comb.} {\bf 23} (2001) 43.
%
\bibitem{MS}
Motegi K and Sakai K
{\it J. Phys. A: Math. Theor.} {\bf 47} (2014) 445202.
%
\bibitem{MS2}
Motegi K and Sakai K
{\it J. Phys. A: Math. Theor.} {\bf 46} (2013) 355201.
%
\bibitem{GMmat}
Golinelli O and Mallick K
{\it J. Phys. A:Math. Gen.} {\bf 39} (2006) 10647.
%
\bibitem{KM}
Katsura H and Maruyama I
{\it J. Phys. A:Math. Theor.} {\bf 43} (2010) 175003.
%
\bibitem{Ko}
Korepin V E
{\it Commun. Math. Phys.} {\bf 86} (1982) 391
%
\bibitem{Iz}
Izergin A
{\it Sov. Phys. Dokl.} {\bf 32} (1987) 878
%
\bibitem{Mo}
Motegi K, in progress.
%
\bibitem{LS}
Lascoux A and Sch\"utzenberger M
{\it C. R. Acad. Sci. Parix S\'er. I Math}
{\bf 295}  (1982) 629.
%
\bibitem{FK}
Fomin S and Kirillov A N
{\it Proc. 6th Internat. Conf. on Formal Power Series and
Algebraic Combinatorics, DIMACS} (1994) 183.
%
\bibitem{Buch}
Buch A S
{\it Acta. Math.} {\bf 189} (2012) 37.
%
\bibitem{IN}
Ikeda T and Naruse H
{\it Adv in Math.} {\bf 243} (2013) 22.
%
\bibitem{IS}
Ikeda T and Shimazaki T
{\it Math. Res. Lett.} {\bf 21} (2014) 333.
%
\bibitem{Mc}
McNamara P J
{\it Electron. J. Combin.} {\bf 13} (2006) 71.
%
\bibitem{Mu}
Murakami J
{\it Infinite analysis, Adv. Ser. Math. Phys.} {\bf 16B} (1991) 765.
%
\bibitem{DA}
Deguchi T and Akutsu Y
{\it J. Phys. Soc. Jpn.} {\bf 62} (1993) 19.
%
\bibitem{FCWZ}
Foda O, Caradoc A D, Wheeler M and Zuparic M L
{\it J. Stat. Mech.} {\bf 0703} (2007) P03010.
%
\bibitem{BBF}
Brubaker B, Bump D and Friedberg S
{\it Commun. Math. Phys.} {\bf 308} (2011) 281.
%
\bibitem{BMN}
Bump D, McNamara P and Nakasuji M
{\it Comm. Math. Univ. St. Pauli} {\bf 63} (2014) 23.
%
\bibitem{BoS}
Bogoliubov N M
{\it SIGMA} {\bf 5} (2009) 052.
%
\bibitem{BW}
Betea D and Wheeler M,
{\it J. Comb. Th. Ser. A} {\bf 137} (2016) 126.
%
\bibitem{BWZ}
Betea D, Wheeler M and Zinn-Justin P,
{\it Refined Cauchy/Littlewood identities and six-vertex model partition functions: II. Proofs and new conjectures} arXiv:1405.7035.
%
\bibitem{MS3}
Motegi K and Sakai K
{\it Quantum Integrable Combinatorics of Schur Polynomials}
arXiv:1507.06740.
%
\bibitem{MSW}
Motegi K, Sakai K and Watanabe S,
to appear.
%
\bibitem{To}
Tokuyama T
{\it J. Math. Soc. Japan} {\bf 40} (1988) 671.
%
\bibitem{OkTo}
Okada S
{\it J. Algebraic Comb.} {\bf 2} (1993) 155.
%
\bibitem{HK}
Hamel A M and King R C
{\it Tokuyama's Identity for Factorial Schur Functions}
arXiv:1501.03561.
%
\bibitem{BN}
Bogoliubov N M and Nassar M
{\it Phys. Lett. A} {\bf 234} (1997) 345.
%
\bibitem{SW}
Sasamoto T and Wadati M
{\it J. Phys. A: Math. Gen.} {\bf 31} (1998) 6057.
%
\bibitem{Bo}
Bogoliubov N M
{\it J. Phys. A: Math. Gen.} {\bf 38} (2005) 9415.
%
\bibitem{SU}
Shigechi K and Uchiyama M
{\it J. Phys. A: Math. Gen.} {\bf 38} (2005) 10287.
%
\bibitem{Tsilevich}
Tsilevich N V
{\it Funct. Anal. Appl.} {\bf 40} (2006) 207.
%
\bibitem{KS}
Korff C and Stroppel C
{\it Adv. Math.} {\bf 225} (2010) 200.
%
\bibitem{Ta}
Takeyama Y
{\it Funckeilaj Ekvacioj} {\bf 57} (2014) 107.
%
\bibitem{Ta2}
Takeyama Y
{\it J. Phys. A: Math. Theor.} {\bf 47} (2014) 465203.
%
\bibitem{Ts}
Tsuchiya O
{\it J. Math. Phys.} {\bf 39} (1998) 5946.
%
\bibitem{Iv}
Ivanov D
{\it Multiple Dirichlet series, L-functions
and automorphic forms, vol 300 of Progr. Math.} Birkh\"auser/Springer,
New York (2012) 205.
%
\bibitem{BBCG}
Brubaker B, Bump D, Chinta G and Gunnells P E
{\it Multiple Dirichlet series, L-functions
and automorphic forms, vol 300 of Progr. Math.} Birkh\"auser/Springer,
New York (2012) 93.
\end{thebibliography}
\end{document}